\newcommand{\oten}{\mathop{\overline{\otimes}}}
\begin{document}

\newtheorem{thm}{Theorem}
\newtheorem{cond}{Condition}
\newtheorem{aspt}{Assumption}
\newtheorem{prop}{Proposition}
\newtheorem{lem}{Lemma}
\newtheorem{cor}{Corollary}
\theoremstyle{definition}
\newtheorem{definition}{Definition}
\newtheorem{exmp}{Example}
\newtheorem{rem}{Remark}

\title[Limit theorems for quantum stochastic models]{Approximation and limit theorems for quantum stochastic models
with unbounded coefficients}
\thanks{L.B.\ and R.v.H.\ are supported by the ARO under 
Grant No.\ W911NF-06-1-0378. A.S.\ acknowledges support from the ONR under 
Grant No.\ N00014-05-1-0420.}
\author{Luc Bouten}
\author{Ramon van Handel}
\author{Andrew Silberfarb}
\address{Physical Measurement and Control 266-33, California Institute 
of Technology, Pasadena, CA 91125, USA}
\email{bouten@its.caltech.edu,ramon@its.caltech.edu,drewsil@its.caltech.edu}

\begin{abstract}
We prove a limit theorem for quantum stochastic differential equations 
with unbounded coefficients which extends the Trotter-Kato theorem for 
contraction semigroups.  From this theorem, general results on the 
convergence of approximations and singular perturbations are obtained.  
The results are illustrated in several examples of physical interest.
\end{abstract}

\keywords{quantum stochastic models; approximation and limit theorems;
Trotter-Kato theorem; Hudson-Parthasarathy equations; singular perturbation;
adiabatic elimination}

\maketitle

\section{Introduction}\label{sec introduction}

It has been well established that the quantum stochastic differential
equations of Hudson and Parthasarathy \cite{HuP84} provide an extremely
rich source of realistic physical models, particularly in the field of
quantum optics \cite{Bar03}.  The physical models in this framework are
always Markovian, just like their classical counterparts, and it is the
Markov property which makes these models particularly tractable.  Several
authors have developed mathematical methodology which can be used to 
obtain these Markovian models directly from fundamental models in quantum 
field theory by applying a suitable limiting procedure 
\cite{AFLu90,Gou05,DDR06}, which places such models on a firm physical and 
mathematical foundation.  On the other hand, the simplification provided 
by the Markov property is essential in many studies, particularly in 
investigations where nontrivial dynamics plays an important role.

Despite the significant simplification provided by the Markov limit, many
realistic physical models remain rather complex, and often further
simplifications are sought.  In the physics literature it has been known
for a long time that models with multiple time scales can be significantly
simplified, in parameter regimes where these time scales are well
separated, by eliminating the fast variables from the description.  Such
singular perturbation methods are known as `adiabatic elimination' in the
physics literature, and are extremely widespread on a heuristic level.
Until recently, however, no rigoruous development was available in the
quantum probability literature.  In two recent papers \cite{GvH07,BoS07}
initial steps were made in this direction, but, as is discussed below,
these results are not entirely satisfactory.

The motivation for the current paper stems from the goal to develop a
widely applicable singular perturbation theory for quantum stochastic
differential equations.  As in previous results on the Markov limit, the
simplified model is obtained by taking the limit of a sequence of quantum
stochastic differential equations which depend on a parameter.  Our main
result, theorem \ref{thm:qsdetk} below, provides a general method to prove
the convergence of a sequence of quantum stochastic differential equations
to a limiting equation.  This technique hinges crucially on the Markov
property of all the equations involved, and extends the Trotter-Kato
theorem in the theory of one-parameter semigroups \cite{Dav80}.  Using
this technique, we develop a general singular perturbation method for
quantum stochastic differential equations, theorem \ref{thm:sing} below,
which contains and extends all known results on this topic.

\subsection*{Relation to previous work}

To our knowledge, the first rigorous result on singular perturbation for
quantum stochastic differential equations was given in the paper
\cite{GvH07}.  This paper considers an atomic system inside an optical
cavity, where the cavity is strongly coupled to an external field. It is
shown that in the strong coupling limit, the cavity can be entirely
eliminated from the model leaving an effective interaction between the
atoms and the external field. The convergence to the limiting equation is
given in the weak topology, and the proofs utilise Dyson expansion
techniques similar to those used in the literature on Markov limits.  
However, the results in this paper are limited to the particular model
under consideration, and it is unclear how to generalize these techniques
to solve general singular perturbation problems.

In \cite{BoS07} singular perturbations are studied in a general setting.
The key observation in this paper is that, due to the Markovian nature of
all the equations, the Trotter-Kato theorem can be used to prove strong
convergence.  This is done by defining certain skew semigroups on the
algebra of bounded operators on the initial space.  Though results of a
general nature are obtained in this fashion, the technique is
essentially restricted to the case that all quantum stochastic
differential equations of interest have bounded coefficients.  In the
unbounded case the semigroups on the algebra of bounded operators are
typically only weak$^*$ continuous and not strongly continuous, which
precludes the application of the Trotter-Kato theorem. Unfortunately, the
boundedness assumption excludes a large number of singular perturbation
problems of practical interest, including the problem studied in
\cite{GvH07}.

The singular perturbation results in this paper extend and bridge the gap
between the results of \cite{GvH07,BoS07}.  We consider a family of
quantum stochastic differential equations whose coefficients are not
necessarily bounded but possess a common invariant domain.  Rather than
considering semigroups on the algebra of bounded operators, we work
directly with semigroups on the initial Hilbert space.  This approach has
many far reaching consequences.  Unbounded coefficients are easily treated
in this setting, and the type of convergence which we obtain (strong
convergence uniformly on compact time intervals) is much stronger than in
\cite{GvH07,BoS07}.  Moreover, the proofs in this setting are
significantly simplified as compared to \cite{GvH07,BoS07}, which
highlights that the current approach is very natural.  In order to prove
our singular perturbation results, we first prove a general theorem on the
convergence of quantum stochastic differential equations with unbounded
coefficients, which extends the Trotter-Kato theorem to our setting.  
With this theorem in hand, the remaining work on singular perturbations is
chiefly algebraic (rather than analytic) in nature.

Though motivated by a rather different set of problems, our main
convergence result, theorem \ref{thm:qsdetk}, is closely related to
results obtained for the purpose of proving existence and uniqueness of
solutions of quantum stochastic differential equations with unbounded
coefficients \cite[proposition 3.4]{Fag93} and \cite[theorem 1.7]{LiW07},
\cite{LiW06}. Instead of showing the existence of a contractive cocycle
satisfying the limit equation, we assume the existence of a unitary
cocycle satisfying the limit equation, which allows us to obtain a much
stronger form of convergence than is obtained in \cite{Fag93,LiW07,LiW06}.

\subsection*{Organization of the paper}

The remainder of this paper is organized as follows. In section
\ref{sec:main} we introduce the class of models that we will consider and
the technical conditions which are assumed to be in force throughout the
paper.  In subsection \ref{sec:tk} we describe our main convergence
result, and subsection \ref{sec:sg} describes our general result on
singular perturbations.  Section \ref{sec:ex} is devoted to the discussion
of several examples of physical interest which demonstrate the flexibility
of our results.  Finally, section \ref{sec:proofmain} presents the proofs
of the results described in subsection \ref{sec:tk}, while section
\ref{sec:singpf} presents the proofs of the results described in section
\ref{sec:sg}.

\section{Main results}
\label{sec:main}

Throughout this paper we let $\mathcal{H}$, the initial space, be a
separable (complex) Hilbert space. We denote by
$\mathcal{F}=\Gamma_s(L^2(\mathbb{R}_+;\mathbb{C}^n))$ the symmetric Fock
space with multiplicity $n\in\mathbb{N}$ (i.e., the one-particle space is
$\mathbb{C}^n\otimes L^2(\mathbb{R}_+) \cong
L^2(\mathbb{R}_+;\mathbb{C}^n)$), and by $e(f)$, $f\in
L^2(\mathbb{R}_+;\mathbb{C}^n)$ the exponential vectors in $\mathcal{F}$.  
The annhiliation, creation and gauge processes on $\mathcal{F}$, as well
as their ampliations to $\mathcal{H}\otimes\mathcal{F}$, will be denoted
as $A_t^i$, $A_t^{i\dag}$ and $\Lambda_t^{ij}$, respectively (the channel
indices are relative to the canonical basis of $\mathbb{C}^n$).  
Moreover, we will fix once and for all a dense domain
$\mathcal{D}\subset\mathcal{H}$ and a dense domain of exponential vectors
$\mathcal{E}=\mathrm{span}\{e(f):f\in \mathfrak{S}\}\subset \mathcal{F}$,
where $\mathfrak{S}\subset L^2(\mathbb{R}_+;\mathbb{C}^n)\cap
L^\infty_{\rm loc}(\mathbb{R}_+;\mathbb{C}^n)$ is an admissible subspace
in the sense of Hudson-Parthasarathy \cite{HuP84} which is presumed to
contain at least all simple functions.  For a detailed description of
these definitions and of the Hudson-Parthasarathy stochastic calculus 
which we will use throughout this paper, we refer to \cite{HuP84,Par92,Bar03}.

For every $k\in\mathbb{N}$ we consider a quantum stochastic 
differential equation of the form
\begin{equation}
\label{eq:hpprelim}
	dU^{(k)}_t = 
	U^{(k)}_t\left\{
	\sum_{i,j=1}^n (N_{ij}^{(k)}-\delta_{ij})\, d\Lambda^{ij}_t 
	+ \sum_{i=1}^n M_i^{(k)} dA^{i\dag}_{t}
	+ \sum_{i=1}^n L_i^{(k)} dA^i_t  + K^{(k)} dt
	\right\}, 
\end{equation} 
where $U^{(k)}_0 =I$ and the quantum stochastic integrals are defined
relative to the domain $\mathcal{D}\oten\mathcal{E}$.  The purpose of this 
paper is to prove that, when the dependence of the coefficients on $k$ is 
chosen appropriately, the solutions $U_t^{(k)}$ converge as $k\to\infty$ 
in a suitable sense to the solution of a limit quantum stochastic 
differential equation
\begin{equation}
\label{eq:hplim}
	dU_t = 
	U_t\left\{
	\sum_{i,j=1}^n (N_{ij}-\delta_{ij})\, d\Lambda^{ij}_t 
	+ \sum_{i=1}^n M_i\, dA^{i\dag}_{t}
	+ \sum_{i=1}^n L_i\, dA^i_t  + K\, dt
	\right\}.
\end{equation}
Of course, the form of $K$, etc.\ will depend on our choice for $K^{(k)}$, 
etc., and a part of our task will be to identify the appropriate limit 
coefficients.  We will both prove a general convergence result---a version 
of the Trotter-Kato theorem for quantum stochastic differential 
equations---and investigate in detail a large class of singular 
perturbation problems (known as \textit{adiabatic elimination} in the 
physics literature) which are of significant practical importance in 
obtaining tractable simplifications of complex physical models.

As it turns out, it is not always the case that $U_t^{(k)}$ converges on
the entire Hilbert space $\mathcal{H}\otimes\mathcal{F}$; in singular
perturbation problems the limit will only exist on a closed subspace
$\mathcal{H}_0\otimes\mathcal{F}$, while for vectors in
$\mathcal{H}_0^\perp\otimes\mathcal{F}$ the limit becomes increasingly
singular as $k\to\infty$.  We will return to this phenomenon in detail 
later on; for the time being, suffice it to say that we must take into 
account the possibility that the limiting equation (\ref{eq:hplim}) lives 
on a smaller space $\mathcal{H}_0\otimes\mathcal{F}$ than the prelimit 
equations (\ref{eq:hpprelim}).  For the time being, we will fix the 
closed subspace $\mathcal{H}_0\subset\mathcal{H}$ and the dense domain
$\mathcal{D}_0\subset\mathcal{D}\cap\mathcal{H}_0$ in $\mathcal{H}_0$.
The quantum stochastic integrals in (\ref{eq:hplim}) will then be defined 
relative to the domain $\mathcal{D}_0\oten\mathcal{E}$.

We begin by imposing conditions on these equations that are assumed to be
in force throughout this paper.  For notational simplicity, we use the
same notation for operators on $\mathcal{H}$ or on $\mathcal{F}$ and for
their ampliations to $\mathcal{H}\otimes\mathcal{F}$ (and similarly for
$\mathcal{H}_0$ and $\mathcal{H}_0\otimes\mathcal{F}$, etc.)  Throughout,
we denote by $^\dag$ an adjoint relationship on the domain of definition,
i.e., two operators $X,X^\dag$ on the domain $\mathcal{D}$ (or
$\mathcal{D}_0$) are always presumed to satisfy the adjoint pair property
$\langle u,Xv\rangle = \langle X^\dag u,v\rangle$ for all
$u,v\in\mathcal{D}$ (resp.\ $\mathcal{D}_0$).

\begin{cond}[Hudson-Parthasarathy conditions]
\label{cond:invariance}
For all $k\in\mathbb{N}$ and $1\le i,j\le n$, the operators $K^{(k)}$,
$K^{(k)\dag}$, $L_i^{(k)}$, $L_i^{(k)\dag}$, $M_i^{(k)}$, $M_i^{(k)\dag}$,
$N_{ij}^{(k)}$, $N_{ij}^{(k)\dag}$ have the common invariant
domain $\mathcal{D}$, and the operators $K$, $K^\dag$, $L_i$,
$L_i^\dag$, $M_i$, $M_i^\dag$, $N_{ij}$, $N_{ij}^\dag$
have the common invariant domain $\mathcal{D}_0$.
In addition, we assume that the Hudson-Parthasarathy conditions
\begin{eqnarray*}
	&&K+K^\dag = -\sum_{i=1}^n L_iL_i^\dag,\qquad
	M_i = -\sum_{j=1}^n N_{ij}L_j^\dag, \\
	&&\sum_{j=1}^n N_{mj}N_{\ell j}^\dag =
	\sum_{j=1}^n N_{jm}^\dag N_{j\ell} =
	\delta_{m\ell}
\end{eqnarray*}
are satisfied on $\mathcal{D}_0$, and that
\begin{eqnarray*}
	&&K^{(k)}+K^{(k)\dag} = -\sum_{i=1}^n L_i^{(k)}L_i^{(k)\dag},\qquad
	M_i^{(k)} = -\sum_{j=1}^n N_{ij}^{(k)}L_j^{(k)\dag}, \\
	&& \sum_{j=1}^n N_{mj}^{(k)}N_{\ell j}^{(k)\dag} =
	\sum_{j=1}^n N_{jm}^{(k)\dag} N_{j\ell}^{(k)} =
	\delta_{m\ell}
\end{eqnarray*}
are satisfied on $\mathcal{D}$ for all $k\in\mathbb{N}$.
\end{cond}

Denote by $\theta_t:L^2([t,\infty[\mbox{};\mathbb{C}^n)\to
L^2(\mathbb{R}_+;\mathbb{C}^n)$ the canonical shift $\theta_tf(s)=f(t+s)$, 
and by $\Theta_t:\mathcal{F}_{[t}\to\mathcal{F}$ its second quantization
(here $\mathcal{F}\cong\mathcal{F}_{t]}\otimes\mathcal{F}_{[t}$ denotes 
the usual continuous tensor product decomposition). Recall that an 
adapted process $\{U_t:t\ge 0\}$ on $\mathcal{H}\otimes{\mathcal{F}}$ is 
called a \textit{contraction cocycle} if $U_t$ is a contraction for all 
$t\ge 0$, $t\mapsto U_t$ is strongly continuous and 
$U_{s+t}=U_s(I\otimes\Theta_s^*U_t\Theta_s)$ with 
$I\otimes\Theta_s^*U_t\Theta_s \in \mathcal{F}_{s]}\otimes
\mathcal{H}\otimes\mathcal{F}_{[s}\cong\mathcal{H}\otimes\mathcal{F}$.
If $U_t$, $t\ge 0$ are even unitary, the process is called a 
\textit{unitary cocycle}.

\begin{cond}[Cocycle solutions]
\label{cond:unitarity}
For all $k\in\mathbb{N}$, equation (\ref{eq:hpprelim}) possesses a unique 
solution $\{U_t^{(k)}:t\ge 0\}$ which extends to a contraction cocycle on 
$\mathcal{H}\otimes{\mathcal{F}}$, while equation (\ref{eq:hplim})
possesses a unique solution $\{U_t:t\ge 0\}$ which extends to a unitary 
cocycle on $\mathcal{H}_0\otimes\mathcal{F}$.
\end{cond}

The following elementary result is proved in section \ref{sec:proofmain}.

\begin{lem}
\label{lem:gens}
For $\alpha,\beta\in\mathbb{C}^n$, define $T_t^{(\alpha\beta)}:
\mathcal{H}_0\to\mathcal{H}_0$ such that
\begin{equation*}
	\langle u,T_t^{(\alpha\beta)}v\rangle = e^{-(|\alpha|^2+
		|\beta|^2)t/2}
	\langle u\otimes e(\alpha I_{[0,t]}),
	U_t\,v\otimes e(\beta I_{[0,t]})\rangle
	\qquad \forall\,u,v\in\mathcal{H}_0,~t\ge 0.
\end{equation*}
Then $T_t^{(\alpha\beta)}$ is a strongly continuous contraction 
semigroup on $\mathcal{H}_0$, and the generator 
$\mathscr{L}^{(\alpha\beta)}$ of this semigroup satisfies
$\mathrm{Dom}(\mathscr{L}^{(\alpha\beta)})\supset\mathcal{D}_0$ 
such that for $u\in\mathcal{D}_0$
\begin{equation}
\label{eq:geninvdomain}
	\mathscr{L}^{(\alpha\beta)}u =
	\left(\sum_{i,j=1}^n \alpha_i^* N_{ij} \beta_j +
	\sum_{i=1}^n \alpha^*_i M_i +
	\sum_{i=1}^n L_i \beta_i + K
	- \frac{|\alpha|^2+|\beta|^2}{2}
	\right)u.
\end{equation}
The same result holds for 
$T_t^{(k;\alpha\beta)}:\mathcal{H}\to\mathcal{H}$ and 
$\mathscr{L}^{(k;\alpha\beta)}$, defined by replacing $U_t$ by 
$U_t^{(k)}$ and making the obvious modifications.  In particular,
$\mathrm{Dom}(\mathscr{L}^{(k;\alpha\beta)})\supset\mathcal{D}$.
\end{lem}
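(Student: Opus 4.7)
\emph{Contractivity and semigroup property.} Observe that $\|e(\gamma I_{[0,t]})\|^2=e^{|\gamma|^2 t}$, so the Cauchy--Schwarz inequality and $\|U_t\|\le 1$ give $|\langle u,T_t^{(\alpha\beta)}v\rangle|\le\|u\|\,\|v\|$, and $T_t^{(\alpha\beta)}$ is a well-defined contraction on $\mathcal{H}_0$. For the semigroup property I would decompose $e(\gamma I_{[0,s+t]})=e(\gamma I_{[0,s]})\otimes e(\gamma I_{[s,s+t]})$ ($\gamma\in\{\alpha,\beta\}$) under the continuous tensor product, use the identity $\Theta_s e(\gamma I_{[s,s+t]})=e(\gamma I_{[0,t]})$, and apply the cocycle relation $U_{s+t}=U_s(I\otimes\Theta_s^*U_t\Theta_s)$. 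Writing $(I\otimes\Theta_s^*)U_t(v\otimes e(\beta I_{[0,t]}))=\sum_\lambda h_\lambda\otimes\phi_\lambda$ in $\mathcal{H}_0\otimes\mathcal{F}_{[s}$ and using the adaptedness of $U_s$ factorises the matrix element and collapses it to $e^{(|\alpha|^2+|\beta|^2)(s+t)/2}\langle u,T_s^{(\alpha\beta)}T_t^{(\alpha\beta)}v\rangle$. Strong continuity at $t=0$ reduces to showing $\langle u,T_t^{(\alpha\beta)}v\rangle\to\langle u,v\rangle$, which follows from $\|e(\beta I_{[0,t]})-\Omega\|\to 0$, $\|U_t\|\le 1$, and the strong continuity of the cocycle; a standard contraction-semigroup argument then upgrades weak to strong continuity on $[0,\infty)$.

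\emph{Generator formula.} For $v\in\mathcal{D}_0$, $u\in\mathcal{H}_0$, and $T\ge t$, the Hudson--Parthasarathy first fundamental formula applied to (\ref{eq:hplim}) with $f=\alpha I_{[0,T]}$, $g=\beta I_{[0,T]}$ yields
\[
\langle u\otimes e(f),U_t v\otimes e(g)\rangle - \langle u,v\rangle\,e^{T\alpha^*\beta}
= \int_0^t \langle u\otimes e(f),U_s Yv\otimes e(g)\rangle\,ds,
\]
where $Y=\sum_{i,j}\alpha_i^*\beta_j(N_{ij}-\delta_{ij})+\sum_i\alpha_i^*M_i+\sum_i L_i\beta_i+K$ preserves $\mathcal{D}_0$. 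Using adaptedness to replace $e(\gamma I_{[0,T]})$ by $e(\gamma I_{[0,t]})$ on the left and by $e(\gamma I_{[0,s]})$ in the integrand, the exponential prefactors combine so that, with $X=Y+\alpha^*\beta-(|\alpha|^2+|\beta|^2)/2$ equal to the right-hand side of (\ref{eq:geninvdomain}),
\[
\langle u,T_t^{(\alpha\beta)}v\rangle - \langle u,v\rangle = \int_0^t \langle u,T_s^{(\alpha\beta)}Xv\rangle\,ds.
\]
Since $u\in\mathcal{H}_0$ is arbitrary and $Xv\in\mathcal{H}_0$, this lifts to the Bochner identity $T_t^{(\alpha\beta)}v-v=\int_0^t T_s^{(\alpha\beta)}Xv\,ds$; dividing by $t$ and using the strong continuity of $s\mapsto T_s^{(\alpha\beta)}Xv$ at $s=0$ shows $v\in\mathrm{Dom}(\mathscr{L}^{(\alpha\beta)})$ with $\mathscr{L}^{(\alpha\beta)}v=Xv$. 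The same proof handles $T_t^{(k;\alpha\beta)}$, with $\mathcal{H}_0$ and $\mathcal{D}_0$ replaced by $\mathcal{H}$ and $\mathcal{D}$; contractivity of $U^{(k)}$ is all that is used.

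\emph{Main obstacle.} The delicate step will be the bookkeeping of exponential factors in the generator step: the normalisation $e^{-(|\alpha|^2+|\beta|^2)t/2}$, the adaptedness factors $e^{(T-t)\alpha^*\beta}$ and $e^{(T-s)\alpha^*\beta}$ arising from replacing $e(\gamma I_{[0,T]})$, and the $-\delta_{ij}$ shift absorbed into the first fundamental formula must combine into precisely the generator $X$ of (\ref{eq:geninvdomain}). The remaining pieces are structural: contractivity is immediate, the semigroup law follows from the cocycle identity together with the tensor-product decomposition of constant-function exponentials, and the generator formula is then a routine consequence of the first fundamental formula and strong continuity.
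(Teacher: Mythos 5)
Your proposal is correct and follows essentially the same route as the paper: the first fundamental formula with constant-amplitude exponential test vectors yields the weak identity $\langle u,T_t^{(\alpha\beta)}v\rangle-\langle u,v\rangle=\int_0^t\langle u,T_s^{(\alpha\beta)}\mathscr{L}^{(\alpha\beta)}v\rangle\,ds$, which is upgraded to a Bochner integral identity and differentiated at $t=0$. The only cosmetic difference is that you work on a fixed horizon $T$ and invoke adaptedness to strip the $[t,T]$ and $[s,T]$ pieces (plus the product-rule step absorbing $e^{(\alpha^*\beta-(|\alpha|^2+|\beta|^2)/2)t}$), whereas the paper uses the time-$t$ test vectors directly and the chain rule; these are equivalent bookkeeping, and your explicit treatment of the semigroup law, contractivity and weak-to-strong continuity fills in what the paper dismisses as ``follows easily.''
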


We now impose one further condition: we assume that
(\ref{eq:geninvdomain}) completely determines
$\mathscr{L}^{(\alpha\beta)}$.

\begin{cond}[Core]
\label{cond:core}
For any $\alpha,\beta\in\mathbb{C}^n$, the domain
$\mathcal{D}_0$ is a core for $\mathscr{L}^{(\alpha\beta)}$.
\end{cond}

\begin{rem}
Note that it is not necessary for our purposes to assume that $U_t^{(k)}$ 
is unitary or that $\mathcal{D}$ is a core for 
$\mathscr{L}^{(k;\alpha\beta)}$, $k\in\mathbb{N}$.  Typically this will 
nonetheless be the case in physical applications; in particular, recall 
that when the coefficients are bounded, condition \ref{cond:invariance} 
guarantees that the solution of the associated quantum stochastic 
differential equation is unitary and the core condition is trivially 
satisfied.  In the unbounded case, however, the solution may only be a 
contraction due to the possibility of explosion.
\end{rem}

\begin{rem}
We have chosen the \textit{right} Hudson-Parthasarathy equations
(\ref{eq:hpprelim}) and (\ref{eq:hplim}), rather than the more familiar 
\textit{left} equations where the coefficients are placed to the left of 
the solution.  This means that the Schr\"odinger evolution of a state 
vector $\psi\in\mathcal{H}_0\otimes\mathcal{F}$ is given by $U_t^*\psi$, 
etc.  The main reason for this choice is that for quantum stochastic 
differential equations with unbounded coefficients, it is generally much 
easier to prove the existence of a unique cocycle solution for the right 
equation than for the left equation (see \cite{Fag90,LiW06}); this is 
chiefly because it is not clear that the solution should leave
$\mathcal{D}_0\oten\mathcal{E}$ invariant, so that the left equation may 
not be well defined (this is not a problem for the right equation, as the 
solution only appears to the left of all unbounded coefficients and is 
presumed to be bounded).  This appears to be an artefact of the fact that 
quantum stochastic integrals are defined on a fixed domain, and is not 
necessarily a physical problem.  

We will ultimately prove that $U_t^{(k)*}$ converges to $U_t^*$ strongly
on $\mathcal{H}_0\otimes\mathcal{F}$ uniformly on compact time intervals, 
i.e., we will show that
$$
	\lim_{k\to\infty}\sup_{0\le t\le T}\|U_t^{(k)*}\psi-
		U_t^*\psi\|= 0\qquad
	\forall\,\psi\in\mathcal{H}_0\otimes\mathcal{F},~
	T<\infty.
$$
Therefore, there is no loss of generality in working with the more 
tractable right equations.  If we wish to begin with a well defined left 
equation, our results can be immediately applied to the 
Hudson-Parthasarathy equation for its adjoint.
\end{rem}

\begin{rem}
\label{rem:fagnola}
In practice, the result of Fagnola \cite{Fag90,FRS94} provides a 
convenient sufficient condition which can be used to verify conditions 
\ref{cond:unitarity} and \ref{cond:core}.  Let us recall a slightly 
stronger version of this result (for simplicity phrased in terms of $U_t$, 
the analogous result holds for $U_t^{(k)}$).  Suppose that for all 
$u\in\mathcal{D}_0$ and $\ell\in\mathbb{N}$, there exists a constant 
$c(\ell,u)$ such that:
\begin{enumerate}
\item For all $u\in\mathcal{D}_0$ and for some $\varepsilon>0$
independent of $u$
$$
	\sum_{\ell=1}^\infty c(\ell,u)\,\varepsilon^\ell<\infty;
$$
\item For all $\ell\in\mathbb{N}$ and all choices of 
$X(1),\ldots,X(\ell)$, where each $X(\cdot)$ is one of $K$, $K^\dag$, 
$L_i$, $L_i^\dag$, $M_i$, $M_i^\dag$, $N_{ij}$, $N_{ij}^\dag$, we have for 
all $u\in\mathcal{D}_0$
$$
	\|X(1)\cdots X(\ell)u\| \le c(\ell,u)\,\sqrt{(\ell+m)!},
$$
where $m$ is the number of occurences of $K$ or $K^\dag$ in the
sequence $X(1),\ldots,X(\ell)$.
\end{enumerate}
Then equation (\ref{eq:hplim}) possesses a unique solution $\{U_t:t\ge 
0\}$ that extends to a unitary cocycle on 
$\mathcal{H}_0\otimes\mathcal{F}$, which 
verifies condition \ref{cond:unitarity}.  To show that $\mathcal{D}_0$ is 
a core for $\mathscr{L}^{(\alpha\beta)}$, it suffices to note that the 
Fagnola conditions imply that every $u\in\mathcal{D}_0$ is analytic 
\cite[definition 3.1.17]{BrR87}, so that the result follows from 
\cite[corollary 3.1.20]{BrR87}.  This verifies condition \ref{cond:core}.

If we are only interested in the existence and uniqueness of a contraction 
cocycle $U_t$, simple sufficient conditions exist in a much more general 
setting than the one considered in \cite{Fag90,FRS94}.  In particular, 
from \cite[theorem 1.1]{LiW06}, one can see that it suffices to prove that 
the closure of $\mathscr{L}^{(\alpha\beta)}$ on $\mathcal{D}_0$, as 
defined in equation (\ref{eq:geninvdomain}), is the generator of a 
strongly continuous contraction semigroup on $\mathcal{H}_0$ for every 
$\alpha,\beta\in\mathbb{C}^n$.
\end{rem}

\subsection{A Trotter-Kato theorem for quantum stochastic differential 
equations}
\label{sec:tk}

In order to prove that $U_t^{(k)}$ converges to $U_t$ as $k\to\infty$, we
would like to use an argument which requires only to verify conditions 
on the coefficients of the associated equations.  After all, the
dependence of the coefficients on $k$ is known in advance, while the
solutions of the equations could potentially depend on $k$ in a
complicated manner.  In order to set the stage for our main convergence 
result, let us shift our attention for the moment to the semigroups
$T_t^{(k;\alpha\beta)}$ and $T_t^{(\alpha\beta)}$.  The convergence of the 
former to the latter can be verified using a standard technique: the 
Trotter-Kato theorem.  For future reference, we state this result here in 
a particularly convenient form \cite[theorem 3.17, p.\ 80]{Dav80}.

\begin{thm}[Trotter-Kato]
\label{thm Trotter-Kato}
Let $\mathcal{H}$ be a Hilbert space and let 
$\mathcal{H}_0\subset\mathcal{H}$ be a closed 
subspace. For each $k\in\mathbb{N}$, let $T_t^{(k)}$ be a strongly 
continuous contraction semigroup on $\mathcal{H}$ with generator 
$\mathscr{L}^{(k)}$. Moreover, let $T_t$ be a strongly continuous 
contraction semigroup on $\mathcal{H}_0$ with generator $\mathscr{L}$. 
Let $\mathcal{D}_0$ be a core for $\mathscr{L}$.
The following conditions are equivalent:
\begin{enumerate}
\item For all $\psi \in \mathcal{D}_0$ there exist 
$\psi^{(k)} \in \mathrm{Dom}(\mathscr{L}^{(k)})$ such that 
\begin{equation*}
    \psi^{(k)} \xrightarrow{k\to\infty} \psi, \qquad 
    \mathscr{L}^{(k)}\psi^{(k)} \xrightarrow{k\to\infty}
    \mathscr{L}\psi.
\end{equation*}
\item For all $T<\infty$ and $\psi\in\mathcal{H}_0$ 
\begin{equation*}
    \lim_{k \to \infty} \sup_{0 \le t \le T} 
    \|T_t^{(k)}\psi - T_t\psi\| = 0.
\end{equation*}   
\end{enumerate}
\end{thm}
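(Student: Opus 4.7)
The plan is the standard route of one-parameter semigroup theory, pivoting through strong resolvent convergence. The direction (b) $\Rightarrow$ (a) is the easy one: given $\psi\in\mathcal{D}_0$, set $\phi=(I-\mathscr{L})\psi\in\mathcal{H}_0\subset\mathcal{H}$ and define $\psi^{(k)}=(I-\mathscr{L}^{(k)})^{-1}\phi\in\mathrm{Dom}(\mathscr{L}^{(k)})$, which makes sense since $1\in\rho(\mathscr{L}^{(k)})$ for contraction semigroups. The Laplace representations
\begin{equation*}
\psi^{(k)} = \int_0^\infty e^{-t}T_t^{(k)}\phi\,dt,\qquad \psi = \int_0^\infty e^{-t}T_t\phi\,dt,
\end{equation*}
combined with the uniform bound $\|T_t^{(k)}\phi\|\le\|\phi\|$ and hypothesis (b), yield $\psi^{(k)}\to\psi$ by dominated convergence, and the generator identity $\mathscr{L}^{(k)}\psi^{(k)}=\psi^{(k)}-\phi$ then gives $\mathscr{L}^{(k)}\psi^{(k)}\to\mathscr{L}\psi$.

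For the converse (a) $\Rightarrow$ (b), I would first promote generator convergence on $\mathcal{D}_0$ to strong resolvent convergence: $(\lambda-\mathscr{L}^{(k)})^{-1}\phi\to(\lambda-\mathscr{L})^{-1}\phi$ for every $\lambda>0$ and $\phi\in\mathcal{H}_0$. On the dense subset $\{(\lambda-\mathscr{L})\psi:\psi\in\mathcal{D}_0\}\subset\mathcal{H}_0$ (dense because $\mathcal{D}_0$ is a core for $\mathscr{L}$ and $\mathrm{Range}(\lambda-\mathscr{L})=\mathcal{H}_0$), the identity
\begin{equation*}
(\lambda-\mathscr{L}^{(k)})^{-1}\phi - \psi = (\lambda-\mathscr{L}^{(k)})^{-1}\bigl[\phi-(\lambda-\mathscr{L}^{(k)})\psi^{(k)}\bigr] + (\psi^{(k)}-\psi),
\end{equation*}
with $\psi^{(k)}$ from hypothesis (a), combined with the contractivity bound $\|(\lambda-\mathscr{L}^{(k)})^{-1}\|\le 1/\lambda$, delivers the limit; the uniform resolvent bound then extends convergence by density to all $\phi\in\mathcal{H}_0$.

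The remaining step is to go from strong resolvent convergence to uniform-on-compacts semigroup convergence, and this is where the main obstacle lies. I would introduce the Yosida approximations $\mathscr{L}^{(k)}_\lambda=\lambda^2(\lambda-\mathscr{L}^{(k)})^{-1}-\lambda$ and $\mathscr{L}_\lambda$, which are bounded contraction-semigroup generators; resolvent convergence gives $\mathscr{L}^{(k)}_\lambda\phi\to\mathscr{L}_\lambda\phi$ strongly, and the identity
\begin{equation*}
e^{t\mathscr{L}^{(k)}_\lambda}\phi - e^{t\mathscr{L}_\lambda}\phi = \int_0^t e^{(t-s)\mathscr{L}^{(k)}_\lambda}\bigl[\mathscr{L}^{(k)}_\lambda-\mathscr{L}_\lambda\bigr]e^{s\mathscr{L}_\lambda}\phi\,ds
\end{equation*}
produces $e^{t\mathscr{L}^{(k)}_\lambda}\phi\to e^{t\mathscr{L}_\lambda}\phi$ uniformly on compact time intervals for each fixed $\lambda$. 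The estimate $\|T_t^{(k)}\psi^{(k)}-e^{t\mathscr{L}^{(k)}_\lambda}\psi^{(k)}\|\le t\|\mathscr{L}^{(k)}\psi^{(k)}-\lambda(\lambda-\mathscr{L}^{(k)})^{-1}\mathscr{L}^{(k)}\psi^{(k)}\|$ (valid because $\psi^{(k)}\in\mathrm{Dom}(\mathscr{L}^{(k)})$, with the bounded approximant commuting with the resolvent), and its unprimed analogue, reduce the remaining control to $\|\mathscr{L}\psi-\lambda(\lambda-\mathscr{L})^{-1}\mathscr{L}\psi\|\to 0$ as $\lambda\to\infty$, which is the Yosida property applied to $\mathscr{L}\psi$. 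The crucial subtlety is the order of limits: I would choose $\lambda$ large first to control this Yosida error, then use resolvent convergence at the fixed vector $\mathscr{L}\psi$ together with $\mathscr{L}^{(k)}\psi^{(k)}\to\mathscr{L}\psi$ to handle $k\to\infty$; a triangle inequality assembles the pieces, and density of $\mathcal{D}_0$ in $\mathcal{H}_0$ together with the uniform bound $\|T_t^{(k)}\|\le 1$ extends the convergence from $\mathcal{D}_0$ to all of $\mathcal{H}_0$.
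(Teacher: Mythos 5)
The paper does not prove this statement at all---it is quoted verbatim from Davies \cite[theorem 3.17, p.~80]{Dav80}---so there is no internal proof to compare against; judged on its own, your argument is correct and is essentially the classical one. The direction (b)$\Rightarrow$(a) via the Laplace-transform representation of the resolvent is exactly the trick the paper itself reuses in its proof of theorem \ref{thm:qsdetk}, and your (a)$\Rightarrow$(b) chain---generator convergence on a core $\Rightarrow$ strong resolvent convergence on the dense set $(\lambda-\mathscr{L})\mathcal{D}_0$ and hence on $\mathcal{H}_0$ by the uniform bound $1/\lambda$, then Yosida approximation with the error estimate $t\|\mathscr{L}\psi-\mathscr{L}_\lambda\psi\|$ and the limits taken in the order ``$\lambda$ fixed by $\varepsilon$ first, then $k\to\infty$''---is the standard route and is assembled correctly, including the final density extension using $\|T_t^{(k)}\|\le 1$. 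In short: a complete and correct proof of the cited result, consistent in spirit with the techniques the paper uses for its quantum extension.
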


It should be noted, in particular, that the condition (a) depends directly
on the generators $\mathscr{L}^{(k)}$ and $\mathscr{L}$, whose form (in
terms of $k$) is typically known explicitly.  Condition (b), however,
entails the convergence of the semigroups (in a particularly strong form).

The central result of this paper is an extension of the Trotter-Kato
theorem to the quantum stochastic differential equations (QSDEs) defined
in equations (\ref{eq:hpprelim}) and (\ref{eq:hplim}).  The proof of the
following theorem can be found in section \ref{sec:proofmain}.

\begin{thm}[QSDE Trotter-Kato]
\label{thm:qsdetk}
The following conditions are equivalent.
\begin{enumerate}
\item
For every $\alpha,\beta\in\mathbb{C}^n$ and 
$u\in\mathcal{D}_0$, there exist
$u^{(k)}\in\mathrm{Dom}(\mathscr{L}^{(k;\alpha\beta)})$
such that
$$
	u^{(k)}\xrightarrow{k\to\infty}u,\qquad
	\mathscr{L}^{(k;\alpha\beta)}u^{(k)}\xrightarrow{k\to\infty}
	\mathscr{L}^{(\alpha\beta)}u.
$$
\item
For every $\psi\in\mathcal{H}_0\otimes\mathcal{F}$ and
$T<\infty$
\begin{equation*}
    \lim_{k \to \infty} \sup_{0 \le t \le T} 
    \|U_t^{(k)*}\psi - U_t^*\psi\| 
	= 0.
\end{equation*}   
\end{enumerate}
Moreover, if $\mathcal{D}$ is a core for all 
$\mathscr{L}^{(k;\alpha\beta)}$, $k\in\mathbb{N}$ then we can always 
choose $\{u^{(k)}\}\subset\mathcal{D}$.
\end{thm}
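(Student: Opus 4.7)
The plan is to reduce the equivalence to the classical Trotter--Kato theorem (Theorem \ref{thm Trotter-Kato}) applied to the semigroups $T_t^{(k;\alpha\beta)}$ and $T_t^{(\alpha\beta)}$ constructed in Lemma \ref{lem:gens}. Condition \ref{cond:core} supplies the required core for the limit generator, and the explicit form (\ref{eq:geninvdomain}) of the generators on $\mathcal{D}_0$ means that condition (a) of the present theorem is exactly hypothesis (i) of Theorem \ref{thm Trotter-Kato} for these semigroups. The two halves of the proof then consist of passing between semigroup convergence and QSDE convergence through the matrix element identity of Lemma \ref{lem:gens}.

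For $(a) \Rightarrow (b)$, Theorem \ref{thm Trotter-Kato} immediately yields
\[
	\sup_{0 \le t \le T}\|T_t^{(k;\alpha\beta)}u - T_t^{(\alpha\beta)}u\| \to 0
\]
for every $u \in \mathcal{H}_0$, $\alpha,\beta \in \mathbb{C}^n$ and $T < \infty$. By Lemma \ref{lem:gens} this is the uniform convergence on $[0,T]$ of the matrix elements $\langle v \otimes e(\alpha I_{[0,t]}), U_t^{(k)}\, u \otimes e(\beta I_{[0,t]})\rangle$ for $u \in \mathcal{H}_0$, $v \in \mathcal{H}$. I extend this in three steps: (i) iterating the cocycle identity $U_{s+t} = U_s(I \otimes \Theta_s^* U_t \Theta_s)$ along a finite partition of $[0,T]$ expresses matrix elements against step-function exponentials as products of single-step semigroups; (ii) the adaptedness splitting $e(f) = e(fI_{[0,t]}) \otimes e(fI_{[t,\infty)})$ combined with density of step functions in $\mathfrak{S}$ lifts this to arbitrary $f,g \in \mathfrak{S}$; (iii) linearity together with $\|U_t^{(k)}\|,\|U_t\| \le 1$ yields uniform weak convergence $\sup_{0 \le t \le T}|\langle \phi, (U_t^{(k)} - U_t)\psi\rangle| \to 0$ for $\psi \in \mathcal{H}_0 \oten \mathcal{F}$, $\phi \in \mathcal{H} \oten \mathcal{F}$. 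To upgrade to strong convergence I use the unitarity of $U_t$: for $\psi \in \mathcal{H}_0 \oten \mathcal{F}$,
\[
	\|U_t^{(k)*}\psi - U_t^*\psi\|^2 \le 2\,\mathrm{Re}\,\langle U_t^*\psi,\,U_t^*\psi - U_t^{(k)*}\psi \rangle,
\]
and a finite $\varepsilon$-net in the compact set $\{U_t^*\psi : t \in [0,T]\}$ handles the $t$-dependence of the test vector, reducing the right-hand side to the uniform weak convergence just obtained against finitely many fixed vectors.

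For $(b) \Rightarrow (a)$, the argument runs in reverse. Applying (b) to $\psi = v \otimes e(\beta I_{[0,T]})$ with $v \in \mathcal{H}_0$ and exploiting adaptedness gives uniform strong convergence of $U_t^{(k)*}(v \otimes e(\beta I_{[0,t]}))$ on $[0,T]$. The same weak-to-strong trick, now built on $\|U_t\phi\| = \|\phi\|$ for $\phi \in \mathcal{H}_0 \oten \mathcal{F}$, converts strong convergence of $U_t^{(k)*}$ into uniform strong convergence $U_t^{(k)}\phi \to U_t\phi$ on the limit subspace. Writing $T_t^{(\alpha\beta)}u = R_\alpha^{t*}\,U_t\,R_\beta^t u$ with $R_\gamma^t u := e^{-|\gamma|^2 t/2}\,u \otimes e(\gamma I_{[0,t]})$ (a bounded operator from $\mathcal{H}_0$ into $\mathcal{H}_0 \oten \mathcal{F}$, with the evident extension into $\mathcal{H} \oten \mathcal{F}$), this descends to uniform strong convergence $T_t^{(k;\alpha\beta)}u \to T_t^{(\alpha\beta)}u$ for $u \in \mathcal{H}_0$; Theorem \ref{thm Trotter-Kato} together with Condition \ref{cond:core} then delivers (a). For the final clause, if $\mathcal{D}$ is a core for every $\mathscr{L}^{(k;\alpha\beta)}$ a standard diagonal argument replaces each $u^{(k)}$ produced above by some $\tilde u^{(k)} \in \mathcal{D}$ within distance $1/k$ in the graph norm of $\mathscr{L}^{(k;\alpha\beta)}$, preserving the two limits.

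I expect the main obstacle to be the bookkeeping needed to propagate uniformity in $t$ through the chain of approximations: at each stage (cocycle iteration, step-function density, and the weak-to-strong upgrade) one must simultaneously control $t$-dependent exponential prefactors, $t$-dependent truncations $fI_{[0,t]}$, and, most delicately, a $t$-dependent test vector $U_t^*\psi$, while having only pointwise weak convergence of matrix elements. The finite $\varepsilon$-net in the compact curve $\{U_t^*\psi : t \in [0,T]\}$ is the key device that makes the two directions of the proof completely parallel.
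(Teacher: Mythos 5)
Your proposal is correct in substance and shares the paper's skeleton---everything is funneled through the semigroups of Lemma \ref{lem:gens} and the classical Trotter--Kato theorem, the cocycle factorization of matrix elements against step-function exponential vectors, and unitarity of $U_t$ for the weak-to-strong upgrade---but you obtain the crucial uniformity in $t$ by a genuinely different mechanism. The paper proves only pointwise-in-$t$ weak convergence, upgrades it pointwise to $\|U_t^{(k)*}\psi-U_t^*\psi\|\to 0$, and then gets uniformity on compacts by a \emph{second} application of Trotter--Kato: on the two-sided Fock space the shifted adjoints $V_t^{(k)}=\tilde\Theta_t U_t^{(k)*}$ and $V_t=\tilde\Theta_t U_t^*$ are strongly continuous contraction semigroups, and the approximating vectors required by the Trotter--Kato hypothesis are manufactured with resolvents, $\psi^{(k)}=R_\lambda^{(k)}(\lambda\psi-\mathscr{\tilde L}\psi)$, whose convergence follows from the pointwise result by dominated convergence. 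You instead push uniformity directly through the cocycle product decomposition (using the uniform-on-compacts conclusion of Trotter--Kato for each one-step semigroup) and through the weak-to-strong estimate by $\varepsilon$-nets over compact curves such as $\{U_t^*\psi: 0\le t\le T\}$; this is legitimate, since all operators involved are contractions and these curves are strongly continuous, but it is exactly the bookkeeping you flag (variable last increment in the product, $t$-dependent tail vectors and test vectors), and it must be repeated at every stage. Likewise, for (b)$\Rightarrow$(a) the paper needs only pointwise convergence of $T_t^{(k;\alpha\beta)}$ and constructs $u^{(k)}=R_\lambda^{(k;\alpha\beta)}(\lambda-\mathscr{L}^{(\alpha\beta)})u$ by hand, whereas you first upgrade to uniform semigroup convergence (more $\varepsilon$-nets) so as to quote the stated equivalence; both work, and your final core/graph-norm step coincides with the paper's. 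What the paper's route buys is that all uniformity questions are delegated to a single black-box invocation of Trotter--Kato via the semigroup structure of the shifted adjoint cocycle, with resolvents replacing any compactness argument; what your route buys is that it avoids the two-sided Fock space extension and the identification of $\tilde\Theta_tU_t^*$ as a semigroup, at the price of several layers of net arguments that need to be written out carefully.
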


The convergence of $T_t^{(k;\alpha\beta)}$ to $T_t^{(\alpha\beta)}$ for
all $\alpha,\beta\in\mathbb{C}^n$ thus turns out to be equivalent to the
convergence, in a very strong sense, of $U_t^{(k)}$ to $U_t$. This is a
powerful technique precisely because of the fact that convergence can be
determined directly from the generators $\mathscr{L}^{(k;\alpha\beta)}$,
$\mathscr{L}^{(\alpha\beta)}$ which depend only on the coefficients of
equations (\ref{eq:hpprelim}) and (\ref{eq:hplim}); the technique does not
require us to understand precisely the way in which $U_t^{(k)}$ depends on
$k$.  The benefit of this approach is clearly demonstrated by the proof
of the singular perturbation results below, which follows as an 
application of theorem \ref{thm:qsdetk}.

\begin{rem}
Note that for the implication (a)$\Rightarrow$(b) it always suffices to
choose $\{u^{(k)}\}\subset\mathcal{D}$, as $\mathcal{D}\subset
\mathrm{Dom}(\mathscr{L}^{(k;\alpha\beta)})$.  It is only when we wish to
ensure the existence of $\{u^{(k)}\}\subset\mathcal{D}$ in the reverse
implication that we must assume $\mathcal{D}$ to be a core for all
$\mathscr{L}^{(k;\alpha\beta)}$, $k\in\mathbb{N}$.  In practice we are
almost always interested in (a)$\Rightarrow$(b), where this is irrelevant. 
\end{rem}

\begin{rem}
The proof of theorem \ref{thm:qsdetk} is easily modified to show that
$\|U_t^{(k)}\psi - U_t\psi\|\xrightarrow{k\to\infty}0$ for every
$\psi\in\mathcal{H}_0\otimes\mathcal{F}$ and $t<\infty$.  However,
uniform convergence on compact time intervals appears to be easier to 
obtain for convergence to the adjoint $U_t^*$.  As it is $U_t^*$ which 
defines the physical time evolution, we do not attempt to obtain uniform 
convergence to $U_t$.
\end{rem}

\subsection{Singular perturbations}
\label{sec:sg}

Let us start by introducing the basic assumptions.  We consider
$\mathcal{H}$ and $\mathcal{D}$ as fixed from the outset; as we will
shortly see, we must choose $\mathcal{H}_0\subset\mathcal{H}$ in a
particular way.

\begin{aspt}[Singular scaling]
\label{aspt:sing}
There exist operators $Y$, $Y^\dag$, $A$, $A^\dag$, $B$, 
$B^\dag$, $F_i$, $F_i^\dag$, $G_i$, $G_i^\dag$, $W_{ij}$, $W_{ij}^\dag$ 
with the common invariant domain $\mathcal{D}$ such that 
$$
	K^{(k)} = k^2Y+kA+B,\qquad
	L^{(k)}_i = kF_i+G_i,\qquad
	N_{ij}^{(k)} = W_{ij}
$$
for all $k\in\mathbb{N}$ and $1\le i,j\le n$.
\end{aspt}

\begin{rem}
Note that condition \ref{cond:invariance} imposes additional assumptions
which are presumed to be in force throughout.  In particular, the 
Hudson-Parthasarathy relations in condition \ref{cond:invariance} 
determine completely the form of the fourth coefficient $M_i^{(k)}$.
\end{rem}

The interpretation of this scaling is that the dynamics of the physical
system contains components that evolve on a fast time scale and on a slow
time scale.  The dynamics on the fast time scale is generated by quadratic
($\propto k^2$) term, while the coupling between the fast and slow time
scales is determined by the linear ($\propto k$) term.  As $k\to\infty$,
the dynamical evolution of the fast components of the system will thus
become increasingly singular.  However, under appropriate conditions, the
dynamical evolution of the slow components of the system will converge as
$k\to\infty$, and the time evolution of these components in the limit will
be decoupled from the fast components.  Thus, in the limit of infinite
separation of time scales, we have `adiabatically eliminated' the fast
components of the system leaving only an effective time evolution of the
slow components.  Typically the slow components of the system are the
physical quantities of interest, and the elimination of the fast
components from the model constitutes a significant simplification in many
complex physical models.

As only slow dynamics should remain in the limit, we aim to choose
$\mathcal{H}_0$ such that it contains only the slow degrees of freedom of
the system.  In order to ensure that the slow dynamics do indeed have a
well defined limit, we must impose suitable structural assumptions on the
coefficients; in particular, we must enforce that the terms which are 
linear and quadratic in $k$ do not directly generate dynamics on 
$\mathcal{H}_0$---in the latter case, the limit would surely be singular 
on $\mathcal{H}_0$.  We presently collect all the necessary structural 
assumptions; their significance can be clearly seen in the proof of our 
main result.

\begin{aspt}[Structural requirements]
\label{aspt:structural}
There is a closed subspace $\mathcal{H}_0\subset\mathcal{H}$ such that
\begin{enumerate}
\item $P_0\mathcal{D}\subset\mathcal{D}$;
\item $YP_0=0$ on $\mathcal{D}$;
\item There exist $\tilde Y$, $\tilde Y^\dag$ with the common invariant 
domain $\mathcal{D}$, so that
$\tilde YY=Y\tilde Y=P_1$;
\item $F_j^\dag P_0=0$ on $\mathcal{D}$ for all $1\le j\le n$;
\item $P_0AP_0=0$ on $\mathcal{D}$.
\end{enumerate}
Here $P_0$ denotes the orthogonal projection onto $\mathcal{H}_0$ and
$P_1$ the orthogonal projection onto $\mathcal{H}_0^\perp$. 
We choose the dense domain $\mathcal{D}_0=P_0\mathcal{D}$ in 
$\mathcal{H}_0$.
\end{aspt}

It remains to introduce the limit coefficients.  The following 
expressions emerge naturally in the proof of our main result through the 
application of theorem \ref{thm:qsdetk}.

\begin{aspt}[Limit coefficients]  
\label{aspt:coeff}
Define the operators on $\mathcal{H}_0$
$$
	K=P_0(B - A\tilde YA)P_0,\qquad
	L_i = P_0(G_i - A\tilde YF_i)P_0,
$$
$$
	M_i = -\sum_{j=1}^nP_0W_{ij}
		(G_j^\dag-F_j^\dag\tilde YA)P_0,
	\qquad
	N_{ij} = 
        \sum_{\ell=1}^n
	P_0W_{i\ell}(F_\ell^\dag\tilde YF_j+\delta_{\ell j})P_0.
$$
Then $K$, $K^\dag$, $L_i$, $L_i^\dag$, $M_i$, $M_i^\dag$, $N_{ij}$, 
$N_{ij}^\dag$ have common invariant domain $\mathcal{D}_0$.  To 
ensure that these coefficients satisfy the Hudson-Parthasarathy relations 
of condition \ref{cond:invariance}, we require
$$
	P_0(G_i - A\tilde YF_i)P_1 =
	\sum_{\ell=1}^n
        P_0W_{i\ell}(F_\ell^\dag\tilde YF_j+\delta_{\ell j})P_1 =
	\sum_{\ell=1}^n
        P_1W_{i\ell}(F_\ell^\dag\tilde YF_j+\delta_{\ell j})P_0
	= 0.
$$
\end{aspt}

Our first order of business is to verify that the coefficients $K$, $L_i$,
$M_i$, $N_{ij}$ do indeed satisfy the Hudson-Parthasarathy relations in
condition \ref{cond:invariance}.  This is proved in section
\ref{sec:singpf}.

\begin{lem}
\label{lem:unitaritysp}
Under assumptions \ref{aspt:sing}--\ref{aspt:coeff}, the coefficients $K$, 
$L_i$, $M_i$, $N_{ij}$ satisfy the Hudson-Parthasarathy relations
in condition \ref{cond:invariance}.
\end{lem}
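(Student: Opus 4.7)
The plan is to reduce each Hudson--Parthasarathy relation for the limit coefficients to a purely algebraic identity on $\mathcal{D}_0$, combining (i) the prelimit HP relations expanded as polynomials in $k$, (ii) consequences of assumption \ref{aspt:structural}, and (iii) the compatibility conditions at the end of assumption \ref{aspt:coeff}. First I would expand $K^{(k)}+K^{(k)\dag}+\sum_iL_i^{(k)}L_i^{(k)\dag}=0$ using the singular scaling of assumption \ref{aspt:sing} and match coefficients in $k$ to obtain, on $\mathcal{D}$,
\begin{equation*}
Y+Y^\dag=-\sum_iF_iF_i^\dag,\qquad A+A^\dag=-\sum_i(F_iG_i^\dag+G_iF_i^\dag),\qquad B+B^\dag=-\sum_iG_iG_i^\dag,
\end{equation*}
together with the forced form $M_i^{(k)}=-\sum_jW_{ij}(kF_j^\dag+G_j^\dag)$ and the unitarity relations $\sum_jW_{mj}W_{\ell j}^\dag=\sum_jW_{jm}^\dag W_{j\ell}=\delta_{m\ell}$. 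Sandwiching the first identity with $P_0$ and using $F_j^\dag P_0=0$ and $YP_0=0$ gives $Y^\dag P_0=0$, and adjoints yield $P_0Y=P_0Y^\dag=P_0F_i=0$ and $P_0A^\dag P_0=0$. Hence for $u\in\mathcal{D}_0$ one has $Au,A^\dag u,F_iu\in P_1\mathcal{D}$, $F_i^\dag u=Yu=Y^\dag u=0$, and the specialization of the second identity gives $\sum_iF_iG_i^\dag u=-(A+A^\dag)u$.

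Next I would observe that in every limit coefficient formula $\tilde Y$ acts only on vectors of the form $AP_0v$ or $F_iP_0v$, which lie in $P_1\mathcal{D}$, and likewise $\tilde Y^\dag$ acts only on $A^\dag P_0v\in P_1\mathcal{D}$. Since $Y=P_1YP_1$, replacing $\tilde Y$ by $P_1\tilde YP_1$ preserves $Y\tilde Y=\tilde YY=P_1$ and leaves $K,L_i,M_i,N_{ij}$ unchanged, so without loss of generality I may assume $\tilde Y=P_1\tilde YP_1$ and $\tilde Y^\dag=P_1\tilde Y^\dag P_1$; consequently $\tilde YP_0=P_0\tilde Y=\tilde Y^\dag P_0=P_0\tilde Y^\dag=0$ and $\tilde YP_1=P_1\tilde Y=\tilde Y$, $\tilde Y^\dag P_1=P_1\tilde Y^\dag=\tilde Y^\dag$. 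Introducing the lifted operators on $\mathcal{D}$,
\begin{equation*}
\tilde L_i:=G_i-A\tilde YF_i,\qquad \tilde N_{ij}:=\sum_\ell W_{i\ell}(F_\ell^\dag\tilde YF_j+\delta_{\ell j}),
\end{equation*}
one has $L_i=P_0\tilde L_iP_0$ and $N_{ij}=P_0\tilde N_{ij}P_0$; the compatibility conditions at the end of assumption \ref{aspt:coeff} read exactly $P_0\tilde L_iP_1=P_0\tilde N_{ij}P_1=P_1\tilde N_{ij}P_0=0$ (the last being $P_0\tilde N_{ij}^\dag P_1=0$ by adjoint). Writing $P_0=I-P_1$ between factors and using these conditions, the inner projection drops out of the four key products on $\mathcal{D}_0$:
\begin{equation*}
L_iL_i^\dag u=P_0\tilde L_i\tilde L_i^\dag u,\quad N_{ij}L_j^\dag u=P_0\tilde N_{ij}\tilde L_j^\dag u,\quad N_{mj}N_{\ell j}^\dag u=P_0\tilde N_{mj}\tilde N_{\ell j}^\dag u,\quad N_{jm}^\dag N_{j\ell}u=P_0\tilde N_{jm}^\dag\tilde N_{j\ell}u.
\end{equation*}

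With the inner $P_0$'s eliminated, each of the four HP identities becomes a direct algebraic computation. The engine of all of them is the telescoping
\begin{equation*}
\tilde Y\Bigl(\sum_iF_iF_i^\dag\Bigr)\tilde Y^\dag=-\tilde Y(Y+Y^\dag)\tilde Y^\dag=-(\tilde YY)\tilde Y^\dag-\tilde Y(Y^\dag\tilde Y^\dag)=-\tilde Y^\dag-\tilde Y,
\end{equation*}
together with its sandwiched variants (e.g.\ $F_p^\dag\tilde Y^\dag(\sum_iF_iF_i^\dag)\tilde YF_q=-F_p^\dag\tilde Y^\dag F_q-F_p^\dag\tilde YF_q$), obtained from $\tilde YY=Y^\dag\tilde Y^\dag=P_1$ and the projection identities of the previous paragraph. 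Combining this with the three coefficient identities from the first paragraph and the $W$-contractions $\sum_jW_{mj}W_{\ell j}^\dag=\sum_jW_{jm}^\dag W_{j\ell}=\delta_{m\ell}$, expansion of the four lifted products produces pairwise-cancelling mixed $\tilde Y,\tilde Y^\dag$ cross terms and leaves, respectively, $-(K+K^\dag)u$, $-M_iu$, $\delta_{m\ell}u$, and $\delta_{m\ell}u$, which is exactly what the four HP identities require. The main obstacle is bookkeeping: each expansion produces roughly ten terms that must be rearranged in the correct order using the three coefficient identities and the telescoping, and one must track which factors of $\tilde Y$ or $\tilde Y^\dag$ collapse against which $Y$ or $Y^\dag$; the cleanest case is the gauge unitarity, where after the $W$-contraction only the telescoping survives. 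Invariance of $\mathcal{D}_0$ under the eight limit operators and their adjoint pairs is immediate from the outer $P_0$ together with the invariance of $\mathcal{D}$ stated in assumptions \ref{aspt:sing}--\ref{aspt:structural}.
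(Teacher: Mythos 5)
Your proposal is correct and takes essentially the same route as the paper: expand the prelimit Hudson--Parthasarathy relations in powers of $k$ to get the identities for $Y,A,B$ and the $W$-unitarity, use the extra conditions of assumption \ref{aspt:coeff} to delete the inner $P_0$'s in the products, and finish by telescoping with $\tilde YY=Y\tilde Y=P_1$ together with $F_i^\dag P_0=0$ and $P_0AP_0=0$. Your explicit derivation of $Y^\dag P_0=P_0F_i=P_0A^\dag P_0=0$ and the normalization $\tilde Y=P_1\tilde YP_1$ are harmless bookkeeping refinements of steps the paper performs implicitly.
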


Finally, let us state our main result, whose proof will also be given in 
section \ref{sec:singpf}.  We remind the reader that beside the above 
assumptions, conditions \ref{cond:invariance}--\ref{cond:core} are still 
presumed to be in force; in particular, one must verify separately that 
(\ref{eq:hpprelim}) and (\ref{eq:hplim}) uniquely define contraction and 
unitary cocycles and that $\mathcal{D}_0$ is a core for 
$\mathscr{L}^{(\alpha\beta)}$, $\alpha,\beta\in\mathbb{C}^n$.

\begin{thm}[Singular perturbation]
\label{thm:sing}
Under assumptions \ref{aspt:sing}--\ref{aspt:coeff}, the singularly 
perturbed equations (\ref{eq:hpprelim}) converge to the limiting equation
(\ref{eq:hplim}) on $\mathcal{H}_0$:
\begin{equation*}
    \lim_{k \to \infty} \sup_{0 \le t \le T} 
    \|U_t^{(k)*}\psi - U_t^*\psi\| 
	= 0\qquad
	\mbox{for all }\psi\in\mathcal{H}_0\otimes\mathcal{F}.
\end{equation*}   
\end{thm}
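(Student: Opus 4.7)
The plan is to invoke Theorem \ref{thm:qsdetk} and reduce the problem to verifying condition (a) of that theorem: for each $\alpha,\beta\in\mathbb{C}^{n}$ and each $u\in\mathcal{D}_{0}$ I need to produce $u^{(k)}\in\mathcal{D}$ with $u^{(k)}\to u$ and $\mathscr{L}^{(k;\alpha\beta)}u^{(k)}\to\mathscr{L}^{(\alpha\beta)}u$. First I would use Assumption \ref{aspt:sing} together with the Hudson--Parthasarathy identity $M_{i}^{(k)}=-\sum_{j}N_{ij}^{(k)}L_{j}^{(k)\dag}$ to isolate the $k$-dependence and write
\[
\mathscr{L}^{(k;\alpha\beta)}=k^{2}Y+kX_{1}+X_{0}
\]
on $\mathcal{D}$, where $X_{1}=A+\sum_{i}\beta_{i}F_{i}-\sum_{i,j}\alpha_{i}^{*}W_{ij}F_{j}^{\dag}$ and $X_{0}$ collects the remaining $k$-independent pieces.

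Next I would introduce the asymptotic ansatz
\[
u^{(k)}=u+k^{-1}v_{1}+k^{-2}v_{2},\qquad v_{1}=-\tilde Y A_{\beta}u,\qquad v_{2}=-\tilde Y P_{1}(X_{0}u+X_{1}v_{1}),
\]
where $A_{\beta}:=A+\sum_{i}\beta_{i}F_{i}$. Both correctors lie in $\mathcal{D}$ by the invariance of $\mathcal{D}$ under $A$, $F_{i}$, $W_{ij}$, $B$, $G_{i}$, $\tilde Y$ and under $P_{0}$ (hence $P_{1}$). The $k^{2}$ contribution $Yu$ vanishes by $YP_{0}=0$. For the $k$ contribution I would note that $P_{0}AP_{0}=0$ and $F_{j}^{\dag}P_{0}=0$ (the latter implying $P_{0}F_{i}=0$ on $\mathcal{D}$ via the adjoint pair property and density of $\mathcal{D}$), so that $A_{\beta}u\in\mathcal{H}_{0}^{\perp}$ and $Yv_{1}=-Y\tilde Y A_{\beta}u=-P_{1}A_{\beta}u=-A_{\beta}u$; combined with $F_{j}^{\dag}u=0$ this gives $X_{1}u+Yv_{1}=0$. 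The $O(1)$ contribution becomes $X_{0}u+X_{1}v_{1}+Yv_{2}=P_{0}(X_{0}u+X_{1}v_{1})$ by the choice of $v_{2}$, while the $O(k^{-1})$ and $O(k^{-2})$ remainders involve fixed vectors and vanish in the limit.

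It then remains to verify the algebraic identity $P_{0}(X_{0}u+X_{1}v_{1})=\mathscr{L}^{(\alpha\beta)}u$. Expanding $X_{1}v_{1}=-X_{1}\tilde Y(Au+\sum_{j}\beta_{j}F_{j}u)$ and using once more $P_{0}F_{i}=0$ to discard two parasitic terms involving $P_{0}F_{i}\tilde Y(\cdot)$, this is a direct term-by-term match with the coefficients $K$, $L_{i}$, $M_{i}$, $N_{ij}$ prescribed in Assumption \ref{aspt:coeff}. Lemma \ref{lem:unitaritysp} and Condition \ref{cond:core} then ensure that $\mathscr{L}^{(\alpha\beta)}$ is the generator of a bona fide quantum stochastic cocycle to which Theorem \ref{thm:qsdetk} applies, so condition (a) is satisfied and the claimed strong convergence follows.

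The principal obstacle is essentially bookkeeping rather than analysis: once the singular cancellations dictate $v_{1}$ and $v_{2}$, the final identification produces a polynomial-in-$(\alpha,\beta)$ bundle of terms that must be lined up against the explicit formulas in Assumption \ref{aspt:coeff}. The structural identities $YP_{0}=0$, $\tilde Y Y=Y\tilde Y=P_{1}$, $P_{0}AP_{0}=0$ and $P_{0}F_{i}=0$ play a double role: they make the singular cancellations succeed at orders $k^{2}$ and $k$, and simultaneously they trim the $O(1)$ contribution down to exactly the claimed limit generator, which is why precisely these hypotheses appear in Assumption \ref{aspt:structural}.
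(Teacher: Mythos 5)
Your proposal is correct and follows essentially the same route as the paper: reduce to condition (a) of theorem \ref{thm:qsdetk} and use the Kurtz-style second-order corrector ansatz $u^{(k)}=u+k^{-1}u_1+k^{-2}u_2$ with $u_1=-\tilde YA^{(\alpha\beta)}u$ (your $A_\beta u$ coincides with $A^{(\alpha\beta)}u$ on $\mathcal{D}_0$ since $F_j^\dag P_0=0$), cancelling the $k^2$ and $k$ orders via assumption \ref{aspt:structural} and matching the $O(1)$ term with the coefficients of assumption \ref{aspt:coeff}. Your explicit choice $u_2=-\tilde YP_1(X_0u+X_1u_1)$ is just a concrete version of the paper's choice, so the arguments agree.
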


\section{Examples}
\label{sec:ex}

\subsection{Atom-cavity models with bounded coupling operators}

In this subsection, we will consider a class of physical models which 
consist of a harmonic oscillator coupled to auxiliary degrees of 
freedom.  Such models cover various applications in quantum 
optics, such as a single mode optical cavity coupled to the internal 
degrees of freedom of a collection of atoms.  Set $\mathcal{H} = 
\mathcal{H}'\otimes\ell^2(\mathbb{Z}_+)$.  On the canonical 
basis $\{\varphi_i:i\in\mathbb{Z}_+\}$ of $\ell^2(\mathbb{Z}_+)$, the 
creation, annihilation and number operators are defined as
$$
	b^\dag\,\varphi_i=\sqrt{i+1}\,\varphi_{i+1},\qquad
	b\,\varphi_i = \sqrt{i}\,\varphi_{i-1},\qquad
	b^\dag b\,\varphi_i=i\,\varphi_i.
$$
These definitions extend directly to the domain 
$\mathcal{D}'=\mathrm{span}\{\varphi_i:i\in\mathbb{Z}_+\}\subset
\ell^2(\mathbb{Z}_+)$, and we will choose the dense domain 
$\mathcal{D}=\mathcal{H}'\oten\mathcal{D}'$ in 
$\mathcal{H}$.  We now consider prelimit equations (\ref{eq:hpprelim}) 
whose coefficients have the following form on $\mathcal{D}$:
$$
	N_{ij}^{(k)} = S_{ij}^{(k)},
	\qquad L_i^{(k)} = F_{i}^{(k)}\,b^\dag + G_i^{(k)}, \qquad
	K^{(k)} = E_{11}^{(k)}\,b^\dag b + E_{10}^{(k)}\,b^\dag +
		E_{01}^{(k)}\,b + E_{00}^{(k)}.
$$
Here $E_{pq}^{(k)}$, $F_{i}^{(k)}$, $G^{(k)}_i$, $S_{ij}^{(k)}$ are 
bounded operators on $\mathcal{H}'$, and condition \ref{cond:invariance} 
is presumed to be in force (in particular, $M_i^{(k)}$ are completely 
determined by these definitions).

We begin by showing that such equations possess unique solutions which
extend to unitary cocycles on $\mathcal{H}\otimes\mathcal{F}$, and even
that $\mathcal{D}$ is a core for $\mathscr{L}^{(k;\alpha\beta)}$,
$\alpha,\beta\in\mathbb{C}$. We will establish these facts by verifying
the conditions of Fagnola, see remark \ref{rem:fagnola}.  Indeed, this is
only a minor modification of the computations performed in
\cite{Fag90,FRS94}, and we refer to those papers for a more detailed
account of the necessary steps.

\begin{lem}
The conditions of Fagnola for the existence of unique unitary cocycle 
solutions (remark \ref{rem:fagnola}) are satisfied for the class of models 
considered in this subsection.
\end{lem}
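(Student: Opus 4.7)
The plan is to verify Fagnola's conditions (a) and (b) from remark \ref{rem:fagnola} directly, for each fixed $k$. The key structural observation, which I would exploit throughout, is that every coefficient $X\in\{K^{(k)},K^{(k)\dag},L_i^{(k)},L_i^{(k)\dag},M_i^{(k)},M_i^{(k)\dag},N_{ij}^{(k)},N_{ij}^{(k)\dag}\}$ is, on $\mathcal{D}$, a polynomial in $b,b^\dag$ of degree $2$ (for $K^{(k)}$ and $K^{(k)\dag}$) or of degree at most $1$ (for the others), whose coefficients are bounded operators on $\mathcal{H}'$ commuting with $b$ and $b^\dag$. Note that $M_i^{(k)}$ has degree $1$ rather than $2$: the Hudson--Parthasarathy relation, together with the boundedness of $N_{ij}^{(k)}=S_{ij}^{(k)}$, forces $M_i^{(k)}=-\sum_j S_{ij}^{(k)}L_j^{(k)\dag}$. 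Consequently, in a product $X(1)\cdots X(\ell)$ containing $m$ factors from $\{K^{(k)},K^{(k)\dag}\}$, the total oscillator degree is bounded by $\ell+m$.

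First I would reduce, by linearity and the finite support of vectors in $\mathcal{D}$ in the number basis, to bounding $\|X(1)\cdots X(\ell)(u'\otimes\varphi_n)\|$ for an individual basis vector. Expanding each $X(j)$ into its (at most four) monomials and commuting the bounded-operator coefficients through to the left, $X(1)\cdots X(\ell)$ becomes a sum of at most $4^\ell$ terms, each of the form (a product of bounded operators of norm at most $C_k^\ell$) times (an interleaved product of $b$'s and $b^\dag$'s of total length $p\le\ell+m$), where $C_k$ is a uniform bound on $\|E^{(k)}_{pq}\|,\|F^{(k)}_i\|,\|G^{(k)}_i\|,\|S^{(k)}_{ij}\|$.

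For the oscillator factor, the elementary bound I would use is that any interleaved product of length $p$ applied to $\varphi_n$ has norm at most $(n+p)^{p/2}$, because each step contributes a factor at most $\sqrt{n+p}$ (the running level never exceeds $n+p$). A standard Stirling estimate then gives $(n+p)^{p/2}\le e^{(n+p)/2}\sqrt{p!}$, and with $p\le\ell+m\le 2\ell$ this bounds the oscillator contribution by $e^{n/2}e^\ell\sqrt{(\ell+m)!}$. Putting everything together, I expect
$$
\|X(1)\cdots X(\ell)(u'\otimes\varphi_n)\| \;\le\; \|u'\|\,e^{n/2}\,(4eC_k)^\ell\sqrt{(\ell+m)!},
$$
which is Fagnola's bound (b) with $c(\ell,u'\otimes\varphi_n)=\|u'\|\,e^{n/2}(4eC_k)^\ell$; a general $u\in\mathcal{D}$ is then handled by the triangle inequality over its finite number-basis expansion. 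Because the exponential rate $4eC_k$ is independent of $u$, any $\varepsilon<1/(4eC_k)$ verifies condition (a) uniformly in $u$.

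The subtle point is precisely to secure a $u$-independent exponential rate in condition (a), despite the fact that $\|b^{\dag a}b^b\varphi_n\|$ itself grows in $n$; the Stirling-type estimate $(n+p)^{p/2}\le e^{(n+p)/2}\sqrt{p!}$ is what decouples the $n$-growth into a $u$-dependent prefactor, leaving only a $u$-independent exponential in $\ell$. Once conditions (a) and (b) are established, existence and uniqueness of the unitary cocycle and the core property ($\mathcal{D}$ is a core for $\mathscr{L}^{(k;\alpha\beta)}$) follow automatically from remark \ref{rem:fagnola}.
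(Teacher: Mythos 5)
Your proposal is correct and follows essentially the same route as the paper: both verify Fagnola's conditions by bounding $\|X(1)\cdots X(\ell)\,(\psi\otimes\varphi_p)\|$ for number-basis vectors, absorbing the excitation number $p$ into a $u$-dependent prefactor so that the exponential rate in $\ell$ (and hence the admissible $\varepsilon$) is independent of $u$, and then passing to general $u\in\mathcal{D}$ by a finite linear combination. The only differences are cosmetic: where the paper invokes Fagnola's computations for the bound $\rho^\ell\sqrt{(p+\ell+m)!}$ and then splits the factorial via $(a+b)!\le 2^{a+b}\,a!\,b!$, you derive the analogous estimate self-containedly from the level-counting bound $(p+\mathrm{length})^{\mathrm{length}/2}$ plus Stirling; just make sure the per-factor constant is taken to bound the monomial coefficients of all eight operators, e.g.\ those of $M_i^{(k)}$, which are products such as $\sum_j S_{ij}^{(k)}F_j^{(k)*}$ rather than the individual $E^{(k)}_{pq},F^{(k)}_i,G^{(k)}_i,S^{(k)}_{ij}$ themselves.
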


\begin{proof}
Note that all coefficients are linear combinations of operators of the 
form $X\,b^\dag b$, $X\,b^\dag$, $X\,b$, $X$, where $X$ are 
bounded operators on $\mathcal{H}'$ whose norms are bounded by a fixed 
constant $\|X\|\le K<\infty$ (for simplicity, choose $K>1$).  Arguing as 
in \cite{Fag90,FRS94}, we find
$$
	\|X(1)\cdots X(\ell)\,\psi\otimes\varphi_{p}\|\le
	\rho^\ell\sqrt{(p+\ell+m)!},
$$
where $\psi\in\mathcal{H}'$, $X(1),\ldots X(\ell)$ is an arbitrary 
selection of coefficients or their adjoints, $m$ is the number of 
occurences of $K^{(k)}$ or $K^{(k)\dag}$, and $\rho=4K^2$.
Using the elementary estimate $(\ell+m)!\le 2^{\ell+m}\ell!m!$, we can 
thus choose (see remark \ref{rem:fagnola} for notations)
$$
	c(\ell,\psi\otimes\varphi_{p}) = 
	\rho^\ell\,2^{\ell}
	\sqrt{p!\,2^p},
$$
which does indeed satisfy the necessary analyticity requirement for 
sufficiently small $\varepsilon$.  But any vector $u\in\mathcal{D}$ is the 
linear combination of a finite number of vectors of the form 
$\psi\otimes\varphi_{p}$, so that the claim follows easily.
\end{proof}

With the existence and uniqueness taken care of, it becomes
straightforward to apply our main result on singular perturbations,
theorem \ref{thm:sing}. We will demonstrate this procedure in two physical
applications: the adiabatic elimination of a single mode cavity, which 
extends the results of Gough and Van Handel \cite{GvH07}, and the 
simultaneous elimination of a single mode cavity and the excited level of 
an atom as considered in Duan and Kimble \cite{DuK04}.

\begin{exmp}[Elimination of the cavity]
We consider the case where the coupling of the oscillator with the 
external field becomes increasingly strong; in the limit, we expect to 
eliminate the oscillator entirely from the model as it will be forced 
into its ground state.  In quantum optics, this corresponds to the 
`adiabatic elimination' of an optical cavity in the strong damping limit.
To make this precise, we set
\begin{eqnarray*}
	&& S_{ij}^{(k)} = S_{ij},\quad
	F_{i}^{(k)} = kF_{i},\quad
	G_i^{(k)} = G_i,\\
	&& E_{11}^{(k)} = k^2E_{11},\quad
	E_{10}^{(k)} = kE_{10},\quad
	E_{01}^{(k)} = kE_{01},\quad
	E_{00}^{(k)} = E_{00}
\end{eqnarray*}
for $1\le i,j\le n$, where where $S_{ij}$, $F_{i}$, $G_i$, $E_{pq}$ are 
bounded operators on $\mathcal{H}'$ which are chosen such that the 
Hudson-Parthasarathy relation of condition \ref{cond:invariance} are 
satisfied for all $k\in\mathcal{N}$.  We obtain the following result,
which does not presume conditions \ref{cond:unitarity} and
\ref{cond:core}.

\begin{prop}
Suppose that $E_{11}$ has a bounded inverse. Set
$\mathcal{H}_0 = \mathcal{D}_0 = \mathcal{H}'\otimes\mathbb{C}\varphi_0$,
\begin{eqnarray*}
	&&K = E_{00} -  E_{01}(E_{11})^{-1}E_{10},\qquad
	L_i = G_i - E_{01}(E_{11})^{-1}F_i,\\
	&&N_{ij} = \sum_{\ell=1}^n S_{i\ell}\{F_\ell^*(E_{11})^{-1}F_j+
		\delta_{\ell j}\}
\end{eqnarray*}
on $\mathcal{D}_0$, and define $M_i$ through the Hudson-Parthasarathy 
condition \ref{cond:invariance}.  Then 
\begin{equation*}
    \lim_{k \to \infty} \sup_{0 \le t \le T} 
    \|U_t^{(k)*}\psi - U_t^*\psi\| 
	= 0\qquad
	\mbox{for all }\psi\in\mathcal{H}_0\otimes\mathcal{F}.
\end{equation*}   
\end{prop}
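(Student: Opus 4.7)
The plan is to reduce the proposition to a direct application of Theorem \ref{thm:sing}. First I would recognize that the assumed scaling already fits the template of Assumption \ref{aspt:sing}: comparing coefficients, I read off $Y = E_{11}b^\dag b$, $A = E_{10}b^\dag + E_{01}b$, $B = E_{00}$, and, using the symbols of Assumption \ref{aspt:sing} (which conflict slightly with the $F_i,G_i$ of the subsection), the assumption-level operators are $F_i b^\dag$, $G_i$, and $W_{ij} = S_{ij}$. All of these are defined on the common invariant domain $\mathcal{D} = \mathcal{H}' \oten \mathcal{D}'$, and the Hudson-Parthasarathy relations of Condition \ref{cond:invariance} were assumed.

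The second step is to choose $\mathcal{H}_0 = \mathcal{H}' \otimes \mathbb{C}\varphi_0$ and to verify the five items of Assumption \ref{aspt:structural}. Items (a), (b), (d), (e) are immediate from $b\varphi_0 = 0$ together with the observation that $A\varphi_0$ and $F_i b^\dag \varphi_0$ are each proportional to $\varphi_1 \in \mathcal{H}_0^\perp$. For (c), I would exploit the hypothesis that $E_{11}^{-1}$ is bounded and the fact that $b^\dag b$ acts on $\mathcal{H}_0^\perp$ as multiplication by an integer $\ge 1$, defining $\tilde Y$ by $\tilde Y(\psi \otimes \varphi_i) = i^{-1}E_{11}^{-1}\psi \otimes \varphi_i$ for $i\ge 1$ and $\tilde Y(\psi\otimes\varphi_0)=0$. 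This $\tilde Y$ is bounded, preserves $\mathcal{D}$ (and similarly for $\tilde Y^\dag$), and satisfies $\tilde Y Y = Y\tilde Y = P_1$.

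The third step is to unwind the formulas of Assumption \ref{aspt:coeff} and check that they specialize to the stated $K$, $L_i$, $M_i$, $N_{ij}$. Since $AP_0 = E_{10}b^\dag P_0$ and $(F_ib^\dag)P_0$ lands in $\mathcal{H}'\otimes\mathbb{C}\varphi_1$, each of the composites $A\tilde YA$, $A\tilde Y F_i b^\dag$, and $F_\ell^* b\, \tilde Y F_j b^\dag$ acting on $\mathcal{H}_0$ proceeds through $\varphi_1$, where $\tilde Y$ contributes exactly $E_{11}^{-1}$, and the outermost annihilation returns to $\varphi_0$ via $b\varphi_1 = \varphi_0$; the final $P_0$ then extracts precisely the $\mathcal{H}'$-level operators appearing in the proposition. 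The compatibility identities in Assumption \ref{aspt:coeff} are checked by the same bookkeeping: the mixed terms $P_0(\cdots)P_1$ and $P_1(\cdots)P_0$ vanish because, whichever branch one follows, the resulting vector lies in the wrong $b^\dag b$-eigenspace for the outer projection. I expect this computational step to be the main obstacle, not because any single line is deep but because the cubic-in-the-coefficient expressions have to be tracked carefully to match the stated formulas.

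The fourth step is to discharge Conditions \ref{cond:unitarity} and \ref{cond:core}, which the proposition does not presume. For the prelimit family this is exactly the preceding lemma (Fagnola's conditions hold, hence both a unitary cocycle and the core property via Remark \ref{rem:fagnola}). For the limit equation I would observe that the stated $K$, $L_i$, $M_i$, $N_{ij}$ are all bounded on $\mathcal{H}_0 \cong \mathcal{H}'$, so the classical Hudson-Parthasarathy theory produces a unique unitary cocycle, while $\mathscr{L}^{(\alpha\beta)}$ is bounded and hence $\mathcal{D}_0$ is trivially a core. With Assumption \ref{aspt:sing}--\ref{aspt:coeff} and Conditions \ref{cond:invariance}--\ref{cond:core} all verified, Theorem \ref{thm:sing} applies and yields the asserted uniform-in-time strong convergence on $\mathcal{H}_0\otimes\mathcal{F}$.
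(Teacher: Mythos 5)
Your proposal is correct and follows essentially the same route as the paper: read off $Y$, $A$, $B$, $F_ib^\dag$, $G_i$, $W_{ij}=S_{ij}$ to match Assumption \ref{aspt:sing}, verify Assumption \ref{aspt:structural} with the same $\tilde Y(\psi\otimes\varphi_p)=p^{-1}E_{11}^{-1}\psi\otimes\varphi_p$, identify the limit coefficients of Assumption \ref{aspt:coeff} via the $b\tilde Yb^\dag$ identity, and discharge Conditions \ref{cond:unitarity}--\ref{cond:core} by the Fagnola lemma for the prelimit equations and by boundedness for the limit equation before invoking Theorem \ref{thm:sing}. No gaps worth noting.
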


\begin{proof}
Note that condition \ref{cond:invariance} is satisfied by assumption for 
the prelimit equations (\ref{eq:hpprelim}), while condition 
\ref{cond:unitarity} is satisfied for the prelimit equations by the 
Fagnola conditions verified above.  Note that the singular scaling 
assumption \ref{aspt:sing} is also satisfied by construction.  Next, we 
turn to the structural requirements of assumption \ref{aspt:structural}.
It is immediate that $\mathcal{D}_0=P_0\mathcal{D}\subset\mathcal{D}$,
while $Y=E_{11}\,b^\dag b$ evidently vanishes on $\mathcal{D}_0$.
We define $\tilde Y$ on $\mathcal{D}$ by setting $\tilde 
Y\,\psi\otimes\varphi_0=0$ and $\tilde 
Y\,\psi\otimes\varphi_p=p^{-1}(E_{11})^{-1}\psi\otimes\varphi_p$ for
$p\ge 1$; then indeed $\tilde YY=Y\tilde Y=P_1$ on $\mathcal{D}$.
It is immediate that $F_j^\dag b$ vanishes on $\mathcal{D}_0$ and that
$A=E_{10}\,b^\dag + E_{01}\,b$ satisfies $P_0AP_0=0$.  Hence assumption 
\ref{aspt:structural} is satisfied.  Next we note that, taking into 
account the identity $X'b\tilde Yb^\dag X\,\psi\otimes\varphi_0=
X'(E_{11})^{-1}X\,\psi\otimes\varphi_0$ for any bounded operators
$X,X'$ on $\mathcal{H}'$, the coefficients defined in the proposition 
coincide precisely with the coefficients in assumption \ref{aspt:coeff},
and it is easily verified that the remaining conditions of assumption 
\ref{aspt:coeff} are also satisfied.  In particular, the limit 
coefficients satisfy the requirements of condition \ref{cond:invariance}.
But as the limit coefficients are bounded, conditions \ref{cond:unitarity} 
and \ref{cond:core} are automatically satisfied.  We have thus verified 
all the requirements of theorem \ref{thm:sing}. 
\end{proof}
\end{exmp}

\begin{exmp}[Duan-Kimble]
We  consider a three level atom coupled to a cavity which is itself 
coupled to a single output field, i.e., we set $\mathcal{H}'=\mathbb{C}^3$ 
and $n=1$.  Let us denote by $(|e\rangle, |+\rangle, |-\rangle)$ the 
canonical basis in $\mathbb{C}^3$.  In this basis we define
\begin{equation*}
	\sigma_+^{(+)} = \begin{pmatrix}0&1&0\\0&0&0\\0&0&0 \end{pmatrix},
	\qquad\qquad
	\sigma_+^{(-)} = \begin{pmatrix}0&0&1\\0&0&0\\0&0&0 \end{pmatrix}.
\end{equation*}
Define moreover
$\sigma^{(\pm)}_- = \sigma^{(\pm)*}_+$ and $P_\pm = \sigma^{(\pm)}_-
\sigma^{(\pm)}_+$. The quantum stochastic differential equation for a 
lambda system with one leg resonantly coupled to the cavity, under the 
rotating wave approximation and in the rotating frame, has the following 
coefficients:
\begin{eqnarray*}
	&&S_{11}^{(k)} = I,\qquad G_1^{(k)} = 0,\qquad
	F_1^{(k)} = k\sqrt{\gamma},\qquad
	E_{11}^{(k)} = -\frac{\gamma}{2}\,k^2, \\
	&& E_{10}^{(k)} = -E_{01}^{(k)*} = k^2g\sigma_-^{(+)},\qquad
	E_{00}^{(k)} = k(\sigma_-^{(-)}\alpha^* - 
	\sigma_+^{(-)}\alpha),
\end{eqnarray*}
where $\gamma\in\mathbb{R}_+$ and $\alpha\in\mathbb{C}$.
This model coincides with that of Duan and Kimble \cite{DuK04}, except 
that we have allowed for additional driving on the uncoupled leg of the
atom with amplitude $\alpha$.  Let us now define operators $Y,A,B,F,G,W$ 
as
\begin{equation*}
\begin{split}
  &Y = -\frac{\gamma}{2}\,b^\dag b +
  g\,(\sigma_-^{(+)}b^\dag - \sigma_+^{(+)}b), \qquad
  A = \sigma_-^{(-)}\alpha^* - \sigma_+^{(-)}\alpha, \\
  & B = 0, \qquad
  F_1 = \sqrt{\gamma}\,b^\dag,
  \qquad G_1 = 0, \qquad W_{11} = I.
\end{split}
\end{equation*}
These definitions are easily verified to satisfy condition 
\ref{cond:invariance} and assumption \ref{aspt:sing}.

We proceed to verify assumption \ref{aspt:structural}.  Set
$\mathcal{H}_0 = \mathcal{D}_0 = \mathrm{span}\{|+\rangle\otimes\varphi_0,
|-\rangle\otimes\varphi_0\}$.  Then it is easily verified that
$\mathcal{D}_0=P_0\mathcal{D}\subset\mathcal{D}$ and that 
$YP_0=F_1^\dag P_0 = P_0AP_0 = 0$ on $\mathcal{D}$.  It remains to define 
$\tilde Y$.  To this end, define the following subspaces of $\mathcal{H}$:
\begin{equation*}
  \mathcal{H}_j = \mathrm{span}\{
	|+\rangle\otimes\varphi_j,~
	|-\rangle\otimes\varphi_j,~
	|e\rangle\otimes\varphi_{j-1}\},\qquad
	j\in\mathbb{N}.
\end{equation*}
Note that $P_1\mathcal{D} = \bigoplus_{j=1}^\infty\mathcal{H}_j$ and
that the subspaces $\mathcal{H}_j$, $j\in\mathbb{N}$ are all invariant
under the action of $Y$ where,
with respect to the basis $(|+\rangle\otimes\varphi_j,
|-\rangle\otimes\varphi_j,|e\rangle\otimes\varphi_{j-1})$,
\begin{equation*}
  Y|_{\mathcal{H}_j} = 
	\begin{pmatrix} -\frac{\gamma j}{2} &0 & g\sqrt{j} \\
  0 &-\frac{\gamma j}{2} &0 \\
  -g\sqrt{j} &0 &-\frac{\gamma(j-1)}{2}
  \end{pmatrix},
\end{equation*}
We may then construct $\tilde{Y} = \bigoplus_{j=1}^{\infty}
(\tilde{Y}|_{\mathcal{H}_j})^{-1}$ on $P_1\mathcal{D}$ (and
$\tilde Y\mathcal{D}_0=0$), which reads
\begin{equation*}
  \tilde{Y}|_{\mathcal{H}_j} = -\frac{1}{d_j}\begin{pmatrix} 
	\frac{\gamma(j-1)}{2} &0 & g\sqrt{j} \\
	  0 & \frac{2d_j}{\gamma j} &0 \\
	  -g\sqrt{j} &0 &\frac{\gamma j}{2}
  \end{pmatrix},\qquad d_j =
	\frac{\gamma^2j(j-1)}{4} + g^2 j
\end{equation*}
in the same basis as above.  Thus assumption \ref{aspt:structural} has 
been verified.  We are now in a position to introduce the limit 
coefficients.  By assumption \ref{aspt:coeff}, we must evidently define
on $\mathcal{H}_0$
\begin{equation*}
   K = -\frac{|\alpha|^2\gamma}{2g^2}P_-, \qquad
   L_1 = - \frac{\alpha^*\sqrt{\gamma}}{g}
		\sigma^{(-)}_-\sigma_+^{(+)},\qquad
   N_{11} = I-2P_-.
\end{equation*}
It is again easily verified by explicit computation that the remaining 
conditions of assumption \ref{aspt:coeff} are satisfied.  Moreover, all 
the coefficients that we have introduced satisfy the Hudson-Parthasarathy 
relations of condition \ref{cond:invariance}, condition 
\ref{cond:unitarity} is satisfied for the prelimit equations by the 
Fagnola conditions, and conditions \ref{cond:invariance} and 
\ref{cond:core} are automatically satisfied as the limit coefficients are 
bounded.  Thus all the conditions of theorem \ref{thm:sing} have been 
verified.
\end{exmp}

\subsection{Coupled oscillators}

In the examples in the previous subsection, the application of theorem 
\ref{thm:sing} was significantly simplified by two convenient properties: 
the Fagnola conditions could be verified for that class of models, and the 
limit equations always had bounded coefficients.  In the present section, 
we will develop an example which enjoys neither of these properties.

We consider a single mode cavity in which one of the mirrors is allowed to
oscillate along the cavity axis (see, e.g., \cite{DJ99}).  The initial
system will therefore have two degrees of freedom: the kinematic
observables of the oscillating mirror and the usual observables associated
with the cavity mode.  This is implemented by choosing
$\mathcal{H}=L^2(\mathbb{R})\otimes \ell^2(\mathbb{Z}_+)$.  For the cavity
mode Hilbert space $\ell^2(\mathbb{Z}_+)$, we define the domain
$\mathcal{D}'$ of finite particle vectors and the creation and
annihilation operators $b^\dag$, $b$ as in the previous section.  In the 
mirror Hilbert space $L^2(\mathbb{R})$ we will choose as dense domain the
Schwartz space \cite[ch.\ 8--9]{Fol99}
$$
	\mathcal{S} = \{
		f\in C^\infty(\mathbb{R}):
		\|X^\alpha D^\beta f\|_\infty<\infty
		~ \forall\,\alpha,\beta\in\mathbb{Z}_+
	\},
$$
where we have defined the operators
$$
	Df(x) = \frac{df(x)}{dx},\qquad
	Xf(x) = x f(x)\qquad
	\forall f\in C^\infty(\mathbb{R}).
$$
Note that these operators leave $\mathcal{S}$ invariant, as do all 
operators of the following form:
$$
	(g(X)f)(x) = g(x)f(x)\qquad
	\forall f\in\mathcal{S},~
	g\in C^\infty(\mathbb{R})~\mathrm{s.t.}~
	\|D^\beta g\|_\infty<\infty~\mbox{for all}~\beta\in\mathbb{Z}_+.
$$
In addition, let us recall the following well known facts
\cite[appendix V.3]{ReS80}.  If we define the operators 
$B,B^\dag$ with invariant domain $\mathcal{S}$ and the vectors 
$\Phi_j\in\mathcal{S}$ as
$$
	B = \frac{X+D}{\sqrt{2}},\qquad
	B^\dag = \frac{X-D}{\sqrt{2}},\qquad
	\Phi_j = \pi^{-1/4}(j!)^{-1/2}(B^\dag)^je^{-x^2/2},~j\in\mathbb{Z}_+,
$$
then $\mathcal{S}'=\mathrm{span}\{\Phi_j:j\in\mathbb{Z}_+\}\subset\mathcal{S}$ 
is dense in $L^2(\mathbb{R})$, $B^\dag B\Phi_j=j\Phi_j$, 
$B\Phi_j=\sqrt{j}\,\Phi_{j-1}$, and $B^\dag\Phi_j = 
\sqrt{j+1}\,\Phi_{j+1}$.  Therefore $\mathcal{S}'$ corresponds to the 
domain of finite particle vectors in $\ell^2(\mathbb{Z}_+)$ in the usual 
isomorphism between $L^2(\mathbb{R})$ and $\ell^2(\mathbb{Z}_+)$.  In the 
following we will set $\mathcal{D}=\mathcal{S}\oten\mathcal{D}'$, but we 
will occasionally exploit $\mathcal{S}'$ whenever this 
is convenient.

We are now in a position to introduce the prelimit equations 
(\ref{eq:hpprelim}).  We set $n=1$, i.e., there is only one external 
field, and choose the following prelimit coefficients:
$$
	N_{11}^{(k)} = I,\qquad
	L_1^{(k)} = k\sqrt{\gamma}\,b^\dag,\qquad
	K^{(k)} = 
	k^2\left[i\vartheta\,(B+B^\dag)-\frac{\gamma}{2}\right]b^\dag b
	+ i\Omega\,B^\dag B,
$$
where $\vartheta,\gamma,\Omega>0$ are fixed parameters.  The physical 
interpretation of this model is that the optical frequency of the cavity 
mode is determined by the length of the cavity, and hence by the 
displacement ($B+B^\dag$) of the mirror; this accounts for the
$(B+B^\dag)b^\dag b$ term.  At the same time, we presume that the 
mirror is oscillating in a quadratic potential, which accounts for the 
$B^\dag B$ term, while the remaining terms couple the cavity mode to the 
external field.  In the strong damping limit $k\to\infty$ we expect as 
before that the cavity is eliminated, so that we obtain an effective 
interaction between the mirror and the external field.

\begin{lem}
The prelimit equations (\ref{eq:hpprelim}) each possess a unique solution 
which extends to a contraction cocycle on $\mathcal{H}\otimes\mathcal{F}$.
\end{lem}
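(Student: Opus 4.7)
The plan is to invoke the second half of remark \ref{rem:fagnola}, which is based on \cite[theorem 1.1]{LiW06}: it suffices to show that, for every $\alpha,\beta\in\mathbb{C}$, the closure of $\mathscr{L}^{(k;\alpha\beta)}$ on $\mathcal{D}$ defined by (\ref{eq:geninvdomain}) is the generator of a strongly continuous contraction semigroup on $\mathcal{H}$. Condition \ref{cond:invariance} itself is immediate from the stated form of the coefficients, with $M_1^{(k)}=-N_{11}^{(k)}L_1^{(k)\dag}=-k\sqrt{\gamma}\,b$, so that $K^{(k)}+K^{(k)\dag} = -k^2\gamma\,b^\dag b = -L_1^{(k)}L_1^{(k)\dag}$ on $\mathcal{D}$, and the remaining unitarity relations are trivial since $N_{11}^{(k)}=I$.

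\textbf{Pre-generator and dissipativity.} Using lemma \ref{lem:gens} I first write, on $\mathcal{D}$,
\begin{equation*}
\mathscr{L}^{(k;\alpha\beta)} = ik^2\vartheta(B+B^\dag)b^\dag b - \tfrac{k^2\gamma}{2}b^\dag b + k\sqrt{\gamma}\,(\beta\,b^\dag - \alpha^*\,b) + i\Omega\,B^\dag B + c\,I,
\end{equation*}
where $c=\alpha^*\beta - (|\alpha|^2+|\beta|^2)/2$. Because $(B+B^\dag)b^\dag b$ and $B^\dag B$ are symmetric on $\mathcal{D}$ (and mutually commute), the $i$-multiples contribute purely imaginary terms to $\langle u,\mathscr{L}^{(k;\alpha\beta)}u\rangle$, leaving
\begin{equation*}
\mathrm{Re}\langle u,\mathscr{L}^{(k;\alpha\beta)}u\rangle = -\tfrac{k^2\gamma}{2}\|bu\|^2 + k\sqrt{\gamma}\,\mathrm{Re}\!\bigl[(\beta^*-\alpha^*)\langle u,bu\rangle\bigr] + \mathrm{Re}(c)\|u\|^2.
\end{equation*}
Young's inequality bounds the cross term by $\tfrac{k^2\gamma}{2}\|bu\|^2 + \tfrac{|\beta-\alpha|^2}{2}\|u\|^2$, and the algebraic identity $|\beta-\alpha|^2/2 = -\mathrm{Re}(c)$ then produces exact cancellation. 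Hence $\mathscr{L}^{(k;\alpha\beta)}$ is dissipative on $\mathcal{D}$.

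\textbf{Main obstacle: m-dissipativity, and conclusion.} By Lumer-Phillips it remains to establish that $\mathrm{Range}(\lambda-\mathscr{L}^{(k;\alpha\beta)})$ is dense in $\mathcal{H}$ for some $\lambda>0$; this is the hard part. My approach is to pass to the Schr\"odinger representation $\mathcal{H}\cong L^2(\mathbb{R};\ell^2(\mathbb{Z}_+))$, where $B+B^\dag$ becomes multiplication by $\sqrt{2}\,x$, and decompose
\begin{equation*}
\mathscr{L}^{(k;\alpha\beta)} = M(X) + i\Omega\,B^\dag B + c\,I,\qquad
M(x) = \bigl(ik^2\vartheta\sqrt{2}\,x - \tfrac{k^2\gamma}{2}\bigr)b^\dag b + k\sqrt{\gamma}\,(\beta\,b^\dag - \alpha^*\,b).
\end{equation*}
For each fixed $x$, $M(x)$ is a linear displacement perturbation of the normal operator $(ik^2\vartheta\sqrt{2}\,x - k^2\gamma/2)b^\dag b$, and an elementary Hille-Yosida estimate on the number basis shows that the closure of $M(x)$ generates a contraction semigroup on $\ell^2(\mathbb{Z}_+)$; assembling these fibers yields a contraction semigroup on $L^2(\mathbb{R};\ell^2(\mathbb{Z}_+))$. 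The delicate step is combining this fibered semigroup with the skew-adjoint unitary group generated by $i\Omega\,B^\dag B$, since $B^\dag B$ contains a $D^2$ term that does not commute with multiplication by $X$. To handle this, I exploit that $\mathcal{D}=\mathcal{S}\oten\mathcal{D}'$ is spanned by vectors $\Phi_j\otimes\varphi_p$ that are analytic for the total number operator $I+B^\dag B+b^\dag b$, which dominates every monomial in $\{b,b^\dag,B,B^\dag\}$ appearing in $\mathscr{L}^{(k;\alpha\beta)}$; a standard Nelson-type argument (or, equivalently, a Trotter-product construction on the common invariant domain $\mathcal{D}$) then yields the required range density. Once m-dissipativity of $\overline{\mathscr{L}^{(k;\alpha\beta)}}$ is in hand for every $\alpha,\beta\in\mathbb{C}$, \cite[theorem 1.1]{LiW06} as recorded in remark \ref{rem:fagnola} produces the unique contraction cocycle $\{U_t^{(k)}:t\ge 0\}$ solving (\ref{eq:hpprelim}).
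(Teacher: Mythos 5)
Your reduction is the same as the paper's (invoke remark \ref{rem:fagnola} and \cite[theorem 1.1]{LiW06}, so that it suffices to show the closure of $\mathscr{L}^{(k;\alpha\beta)}$ on $\mathcal{D}$ generates a contraction semigroup), and your dissipativity computation is correct. But the heart of the matter is the range condition, and there your argument has a genuine gap. First, the domination claim is false: the cubic term $(B+B^\dag)b^\dag b$ is \emph{not} relatively bounded by the total number operator $I+B^\dag B+b^\dag b$ (on $\Phi_j\otimes\varphi_p$ it grows like $\sqrt{j}\,p$ versus $j+p$), so no Nelson-commutator/analytic-domination theorem applies off the shelf; and being an analytic vector for the number operator does not make a vector analytic for $\mathscr{L}^{(k;\alpha\beta)}$. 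Even if one proves analyticity of the $\Phi_j\otimes\varphi_p$ for $\mathscr{L}^{(k;\alpha\beta)}$ directly (a delicate combinatorial estimate, precisely of the Fagnola type that the paper points out is \emph{not} available for this example), you would still need to invoke and verify a specific dissipative analogue of Nelson's theorem; you cite none. The Trotter-product alternative is circular: convergence of the product formula to a semigroup generated by the closed sum presupposes exactly the m-dissipativity you are trying to prove. Finally, a small but symptomatic error: $\mathcal{D}=\mathcal{S}\oten\mathcal{D}'$ is not spanned by the $\Phi_j\otimes\varphi_p$ (that span is the proper subspace $\mathcal{S}'\oten\mathcal{D}'$); this is harmless for a range-density argument, but you do not say so.

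The paper closes this gap by a different and fully elementary route: by a corollary of Lumer--Phillips, \cite[lemma 2.1]{LiW06}, it suffices to show that $\mathcal{D}$ is a core for the Hilbert-space adjoint $\mathscr{L}^*$, and this is proved in the subsequent lemma by an explicit computation in the basis $\Phi_j\otimes\varphi_\ell$: one identifies $\mathrm{Dom}(\mathrm{cl}\,T^\dag)$ for any operator of the form $c_1+c_2\,b^\dag b+c_3\,B^\dag B+c_4\,b+c_5\,b^\dag+c_6(B+B^\dag)b^\dag b$ and shows $\mathrm{Dom}(T^*)\subset\mathrm{Dom}(\mathrm{cl}\,T^\dag)$ via a duality estimate with truncated vectors $u_N$. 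If you want to salvage your approach, you should either carry out that adjoint-core computation, or prove directly (with full estimates) that each $\Phi_j\otimes\varphi_p$ is analytic for $\mathscr{L}^{(k;\alpha\beta)}$ and quote a precise generation theorem for dissipative operators with dense analytic vectors; as it stands, the step you yourself call ``delicate'' is asserted rather than proved.
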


\begin{proof}
First, note that condition \ref{cond:invariance} is satisfied for these 
equations on the domain $\mathcal{D}$.  It suffices to verify that the 
closure of $\mathscr{L}^{(k;\alpha\beta)}$ on $\mathcal{D}$ is the 
generator of a strongly continuous contraction semigroup for every 
$k\in\mathbb{N}$ and $\alpha,\beta\in\mathbb{C}$; see remark 
\ref{rem:fagnola} and \cite[theorem 1.1]{LiW06}.  Fix $k,\alpha,\beta$, 
and denote by $\mathscr{L}$ the operator $\mathscr{L}^{(k;\alpha\beta)}$ 
with the domain $\mathcal{D}$.  By an easy corollary of the Lumer-Phillips 
theorem \cite[lemma 2.1]{LiW06}, it suffices to prove that $\mathcal{D}$
is a core for $\mathscr{L}^*$.  But this follows immediately from the 
following lemma, and the proof is complete. 
\end{proof}

\begin{lem}
Let $T$ be an operator on the domain $\mathcal{D}$ of the following form:
$$
	T = c_1 + c_2 b^\dag b  
		+ c_3 B^\dag B
		+ c_4 b + c_5 b^\dag 
		+ c_6(B+B^\dag)b^\dag b,
$$
where $c_{1,\ldots,6}\in\mathbb{C}$.  Then $\mathcal{D}$ is a core for
$T^*$.
\end{lem}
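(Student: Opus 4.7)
\smallskip
\noindent
\textit{Plan.} Since $T$ and its formal adjoint $T^\dag$ have the same polynomial structure in $b,b^\dag,B,B^\dag$, they both leave $\mathcal{D}$ invariant, so $\mathcal{D}\subset\mathrm{Dom}(T^*)$ and $T^\dag=T^*|_{\mathcal{D}}$. To upgrade this to the assertion that $\mathcal{D}$ is a core for $T^*$, I would construct, for each $v\in\mathrm{Dom}(T^*)$, a sequence $v_n\in\mathcal{D}$ with $v_n\to v$ and $T^\dag v_n\to T^*v$ in $\mathcal{H}$. This places $v$ in the closure of $T^\dag|_{\mathcal{D}}$, and the reverse inclusion is automatic.

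The construction uses joint smooth spectral cutoffs of the commuting self-adjoint number operators $N_b=b^\dag b$ and $N_B=B^\dag B$, which are diagonal in the basis $\{\Phi_j\otimes\varphi_p\}$ with $\mathcal{D}$ as a joint core. Fix $\chi\in C^\infty([0,\infty);[0,1])$ with $\chi\equiv1$ on $[0,\tfrac12]$ and $\chi\equiv0$ on $[1,\infty)$, write $\chi_M(\cdot)=\chi(\cdot/M)$, and set
$R_{M,N}=\chi_M(N_b)\,\chi_N(N_B)$. Then $R_{M,N}$ is bounded and self-adjoint, its range equals the finite-dimensional subspace $\mathrm{span}\{\Phi_j\otimes\varphi_p:j<N,\,p<M\}\subset\mathcal{D}$, and $R_{M,N}\to I$ strongly. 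I take $v_{M,N}=R_{M,N}v\in\mathcal{D}$.

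The substantive step is to show $T^\dag v_{M,N}\to T^*v$. Since $\Phi_k\otimes\varphi_q\in\mathcal{D}$, the adjoint relation
$(T^*v)_{kq}=\langle T(\Phi_k\otimes\varphi_q),v\rangle$
expresses each coefficient of $T^*v$ as a specific finite linear combination of $v_{kq},\,v_{k,q\pm1},\,v_{k\pm1,q}$ with known coefficients depending on $k,q,c_1,\ldots,c_6$. A direct computation expresses $(T^\dag v_{M,N})_{kq}$ by exactly the same formula, but with every occurrence of $v_{jp}$ weighted by $\chi_M(p)\chi_N(j)$. Subtracting $R_{M,N}T^*v$ (which applies the weight $\chi_M(q)\chi_N(k)$ uniformly) causes the diagonal terms to cancel, leaving only the discrete derivatives $\chi_M(q\pm1)-\chi_M(q)$ and $\chi_N(k\pm1)-\chi_N(k)$, each of size $O(1/M)$ or $O(1/N)$ on its support, multiplied by the appropriate ladder factors. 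A routine bookkeeping then yields
\[
\|T^\dag v_{M,N}-R_{M,N}T^*v\|^2\;\le\;\frac{C}{M}\!\!\sum_{p\in[M/2,\,M]}\sum_j|v_{jp}|^2\;+\;\frac{CM^2}{N}\!\!\sum_{j\in[N/2,\,N]}\sum_p|v_{jp}|^2.
\]

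The main obstacle is the cubic term $c_6(B+B^\dag)b^\dag b$: its commutator with $\chi_N(N_B)$ inherits an extra factor of $b^\dag b$, and since a general $v\in\mathrm{Dom}(T^*)$ need not lie in $\mathrm{Dom}(b^\dag b)$, this factor cannot be absorbed into $v$. This is precisely what produces the $M^2$ in the second error term and precludes a naive diagonal limit in $(M,N)$. The fix is to cut off $N_b$ first, which forces $p\le M$ on the support, and then let $N$ grow fast enough that $M^2/N\to0$ --- say $N=M^3$. With this choice, both shell masses on the right-hand side tend to zero (the first because $v\in\mathcal{H}$, the second because $M^2/N\to0$ and the mass is bounded by $\|v\|^2$), while $R_{M,N}T^*v\to T^*v$ by dominated convergence in the joint spectral calculus of $(N_b,N_B)$. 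Hence $T^\dag v_{M,M^3}\to T^*v$ and $v_{M,M^3}\to v$, which places $v$ in the closure of $T^\dag|_{\mathcal{D}}$ and completes the proof.
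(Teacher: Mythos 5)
Your argument is correct, but it takes a genuinely different route from the paper's. The paper never constructs an approximating sequence for a given $v\in\mathrm{Dom}(T^*)$: it first identifies $\mathrm{Dom}(\mathrm{cl}\,T^\dag)$ with the ``maximal'' set of coefficient sequences $d\in\ell^2$ whose formal image $\{T_{j,\ell}(d)\}$ is again square-summable (asserting that this set is the graph-norm completion of the finite-particle span), and then argues by duality: if $v$ lies outside this set, testing against the normalized truncated formal images $u_N$ gives $\|u_N\|=1$ while $|\langle Tu_N,v\rangle|=\langle u_N,T^\dag P_{N+1}v\rangle\to\infty$, contradicting the Riesz characterization of $\mathrm{Dom}(T^*)$; hence $\mathrm{Dom}(T^*)\subset\mathrm{Dom}(\mathrm{cl}\,T^\dag)$. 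You instead produce the approximants explicitly, via smooth joint spectral cutoffs in $(N_b,N_B)$ and a commutator estimate, with the anisotropic schedule $N=M^3$ chosen precisely to neutralize the factor $b^\dag b\le M$ that the cubic term picks up when commuted past $\chi_N(N_B)$ --- which is indeed the only troublesome term, and your error bound and its two shell contributions check out (a minor quibble: the range of $R_{M,N}$ is contained in, not equal to, the finite span, which is all you need). What your route buys is that it bypasses, and in effect proves quantitatively, the step the paper dismisses as ``evident'': that square-summability of the formal image already places a vector in the domain of the closure of the minimal operator; for band operators with growing coefficients this minimal-equals-maximal assertion is exactly the kind of statement that merits a cutoff argument like yours, so your proof is more self-contained. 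What the paper's route buys is brevity: the $u_N$ duality trick requires no commutator bookkeeping and no cutoff schedule, using the band structure of $T$ in the number basis only through the single identity $\langle Tu_N,v\rangle=\langle Tu_N,P_{N+1}v\rangle$.
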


\begin{proof}
As $\langle u,Tv\rangle = \langle T^\dag u,v\rangle$ for 
$u,v\in\mathcal{D}$, clearly $\mathcal{D}\subset\mbox{Dom}(T^*)$ 
where $T^*|_{\mathcal{D}}=T^\dag$.  Our goal is to show that the
closure $\mbox{cl}\,T^\dag$ of $T^\dag$ is in fact $T^*$, i.e., we must 
show that $\mbox{Dom}(T^*)\subset\mbox{Dom}(\mbox{cl}\,T^\dag)$.  
We begin by noting that
$$
	\mbox{Dom}(\mbox{cl}\,T^\dag)=\left\{
		\sum_{j,\ell=0}^\infty 
		d_{j,\ell}\,\Phi_j\otimes\varphi_\ell:
		\sum_{j,\ell=0}^\infty|d_{j,\ell}|^2<\infty~~
		\mbox{and}~~
		\sum_{j,\ell=0}^\infty|T_{j,\ell}(d)|^2<\infty\right\},
$$
where we have written
\begin{multline*}
	T_{j,\ell}(d) =
	(c_1^*+c_2^*\ell+c_3^*j)d_{j,\ell}
	+c_4^*\sqrt{\ell}\,d_{j,\ell-1} 
\\
	+c_5^*\sqrt{\ell+1}\,d_{j,\ell+1} 
	+c_6^*(\sqrt{j+1}\,d_{j+1,\ell}+\sqrt{j}\,d_{j-1,\ell})\ell.
\end{multline*}
Indeed, this space is evidently the completion of $\mathcal{S}'\oten
\mathcal{D}'$ with respect to the Sobolev norm
$\|v\|_{T^\dag}=\|v\|+\|T^\dag v\|$, and $\mathcal{D}$ is already
included in this space (this follows from the representation 
\cite[theorem V.13]{ReS80}).  It thus remains to show that 
$\mbox{Dom}(T^*)$ is included in this space as well.  To this end, let us 
define for all $N\in\mathbb{N}$
$$
	v = \sum_{j,\ell=0}^\infty 
		d_{j,\ell}\,\Phi_j\otimes\varphi_\ell,\qquad
	\sum_{j,\ell=0}^\infty|d_{j,\ell}|^2<\infty,\qquad
	u_N = \frac{\sum_{j,\ell=0}^N
		T_{j,\ell}(d)\,\Phi_j\otimes\varphi_\ell}
	{\|\sum_{j,\ell=0}^N
                T_{j,\ell}(d)\,\Phi_j\otimes\varphi_\ell\|}.
$$
Thus $\|u_N\|=1$ and $u_N\in\mathcal{D}$ for all $N\in\mathbb{N}$.
Now recall that by the Riesz representation theorem 
$v\in\mathrm{Dom}(T^*)$ 
iff $|\langle Tu,v\rangle|\le C\|u\|$ for all $u\in\mathcal{D}$.  
But if $v\not\in\mbox{Dom}(\mbox{cl}\,T^\dag)$, then 
$$
	|\langle Tu_N,v\rangle|=
	|\langle Tu_N,P_{N+1}v\rangle|=
	|\langle u_N,T^\dag P_{N+1}v\rangle|=
	\left\|\sum_{j,\ell=0}^N
                T_{j,\ell}(d)\,\Phi_j\otimes\varphi_\ell\right\|
	\xrightarrow{N\to\infty}\infty
$$
where $P_{N}$ is the orthogonal projection onto the subspace
$\mathrm{span}\{\Phi_j\otimes\varphi_\ell:j,\ell\le N\}$.
Thus $v\not\in\mbox{Dom}(\mbox{cl}\,T^\dag)$ implies
$v\not\in\mathrm{Dom}(T^*)$, so that
$\mbox{Dom}(T^*)\subset\mbox{Dom}(\mbox{cl}\,T^\dag)$ as desired.
\end{proof}

Let us now turn to the assumptions \ref{aspt:sing}--\ref{aspt:coeff}.
Clearly our coefficients are of the form required by assumption 
\ref{aspt:sing}.  Let us set $\mathcal{H}_0=L^2(\mathbb{R})\otimes
\mathbb{C}\varphi_0$ and $\mathcal{D}_0=P_0\mathcal{D}=
\mathcal{S}\oten\mathbb{C}\varphi_0$.  Then all the requirements of 
assumption \ref{aspt:structural} are clearly satisfied except that we must 
verify that
$$
	Y = 
	\left[i\vartheta\,(B+B^\dag)-\frac{\gamma}{2}\right]b^\dag b
$$
possesses an inverse $\tilde Y$ on $P_1\mathcal{D}$.  We may simply set, 
however,
$$
	\tilde Y\,\psi\otimes\varphi_0=0,\qquad
	\tilde Y\,\psi\otimes\varphi_j =
	j^{-1}
	\left[i\vartheta\,(B+B^\dag)-\frac{\gamma}{2}\right]^{-1}
	\psi\otimes\varphi_j,\quad j\ge 1.
$$
As $B+B^\dag = X\sqrt{2}$ and the function $g:\mathbb{R}\to\mathbb{C}$,
$g(x) = (i\vartheta x\sqrt{2}-\gamma/2)^{-1}$ is smooth and bounded 
together with all its derivatives, the bounded operator $\tilde Y$ leaves 
the Schwartz space invariant, and clearly $Y\tilde Y=\tilde YY=P_1$ on 
$\mathcal{D}$.  Therefore, all the conditions of assumption 
\ref{aspt:structural} are satisfied.  It should be noted that it would not 
have been sufficient to choose the smaller finite excitation domain 
$\mathcal{S}'$ in $L^2(\mathbb{R})$, despite the fact that it is left 
invariant by all prelimit coefficients, as the inverse $\tilde Y$ does not 
leave $\mathcal{S}'$ invariant.

Having verified assumptions \ref{aspt:sing} and \ref{aspt:structural}, the 
form of the limit equation (\ref{eq:hplim}) is determined by assumption 
\ref{aspt:coeff}.  In particular, we must choose the limit coefficients
$$
	N_{11} =
	\frac{i\vartheta\,(B+B^\dag)+\gamma/2}
	{i\vartheta\,(B+B^\dag)-\gamma/2},\qquad
	L_1 = 0,\qquad
	K = i\Omega\,B^\dag B
$$
on $\mathcal{D}_0$, and one can verify by straightforward manipulation 
that the remaining conditions in assumption \ref{aspt:coeff} are 
satisfied.  However, we have not yet established that the limit equation 
possesses a unique solution which extends to a unitary cocycle, and 
whether the core condition \ref{cond:core} is satisfied.  This is our next 
order of business.

\begin{lem}
The limit equation (\ref{eq:hplim}) possesses a unique solution 
which extends to a contraction cocycle on $\mathcal{H}_0\otimes\mathcal{F}$.
Moreover, the core condition \ref{cond:core} is satisfied.
\end{lem}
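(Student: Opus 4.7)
The plan is to invoke the criterion quoted at the end of remark \ref{rem:fagnola}: both the contraction cocycle property and the core condition \ref{cond:core} will follow at once if I can show that, for every $\alpha,\beta\in\mathbb{C}$, the restriction of $\mathscr{L}^{(\alpha\beta)}$ to $\mathcal{D}_0$ is closable with closure a generator of a strongly continuous contraction semigroup on $\mathcal{H}_0$. Plugging the limit coefficients (which satisfy $L_1=M_1=0$) into (\ref{eq:geninvdomain}) gives on $\mathcal{D}_0$
\begin{equation*}
    \mathscr{L}^{(\alpha\beta)} u =
    i\Omega\, B^\dag B\, u + V^{(\alpha\beta)} u,\qquad
    V^{(\alpha\beta)} := \alpha^* N_{11}\beta -
    \tfrac{|\alpha|^2+|\beta|^2}{2},
\end{equation*}
so the task splits naturally into handling an unbounded essentially skew-adjoint piece plus a bounded perturbation.

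First I would verify that $B^\dag B$ is essentially self-adjoint on $\mathcal{D}_0$. This is classical on the finite-particle domain $\mathcal{S}'\subset\mathcal{S}$, whose closure is the standard harmonic-oscillator number operator. Since $B^\dag B$ maps $\mathcal{S}$ into $\mathcal{S}\subset L^2(\mathbb{R})$, the closure of $B^\dag B|_{\mathcal{S}}$ contains, and therefore equals, the self-adjoint closure of $B^\dag B|_{\mathcal{S}'}$. By Stone's theorem, $\overline{i\Omega B^\dag B|_{\mathcal{D}_0}}$ generates a $C_0$ unitary group---and a fortiori a $C_0$ contraction semigroup---on $\mathcal{H}_0$, with $\mathcal{D}_0$ a core.

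Next I would check that $V^{(\alpha\beta)}$ is bounded and dissipative. Since $B+B^\dag = X\sqrt{2}$ is self-adjoint, functional calculus gives $N_{11}=f(X\sqrt{2})$ with $f(x)=(i\vartheta x\sqrt{2}+\gamma/2)/(i\vartheta x\sqrt{2}-\gamma/2)$, and a direct computation shows $|f|\equiv 1$ on $\mathbb{R}$, so $N_{11}$ is unitary. Cauchy--Schwarz together with the AM--GM inequality then give
\begin{equation*}
    \mathrm{Re}\langle u, V^{(\alpha\beta)}u\rangle
    \le |\alpha||\beta|\,\|u\|^2 - \tfrac{|\alpha|^2+|\beta|^2}{2}\,\|u\|^2
    \le 0
\end{equation*}
for all $u\in\mathcal{H}_0$, so $V^{(\alpha\beta)}$ is dissipative. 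The standard bounded-perturbation theorem for $C_0$ contraction semigroups (a direct consequence of Lumer--Phillips) then yields a $C_0$ contraction semigroup generated by $\overline{i\Omega B^\dag B|_{\mathcal{D}_0}} + V^{(\alpha\beta)}$; since $V^{(\alpha\beta)}$ is bounded, this generator is precisely the closure of $\mathscr{L}^{(\alpha\beta)}|_{\mathcal{D}_0}$ and $\mathcal{D}_0$ remains a core. This simultaneously supplies the hypothesis of the criterion from \cite[theorem 1.1]{LiW06} (yielding the unique contraction cocycle solution of (\ref{eq:hplim})) and verifies condition \ref{cond:core}.

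The only delicate step is the essential self-adjointness on $\mathcal{S}$ rather than on $\mathcal{S}'$: as already observed in the verification of assumption \ref{aspt:structural}, the inverse $\tilde Y$ does not preserve $\mathcal{S}'$, so one is forced to work with the larger Schwartz domain. Fortunately the chain $\mathcal{S}'\subset\mathcal{S}\subset\mathrm{Dom}(\overline{B^\dag B|_{\mathcal{S}'}})$ with matching actions makes this upgrade essentially free, so I expect no serious obstacle.
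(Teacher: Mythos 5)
Your proposal is correct and follows essentially the same route as the paper: reduce to showing that the closure of $\mathscr{L}^{(\alpha\beta)}|_{\mathcal{D}_0}$ generates a $C_0$ contraction semigroup (remark \ref{rem:fagnola} and \cite[theorem 1.1]{LiW06}), note that $i\Omega B^\dag B$ is essentially skew-adjoint on $\mathcal{S}$ so Stone's theorem applies, and treat $\alpha^*N_{11}\beta - (|\alpha|^2+|\beta|^2)/2$ as a bounded dissipative perturbation, with the core property then automatic. Your extra details (essential self-adjointness on $\mathcal{S}$ via the symmetric extension of $B^\dag B|_{\mathcal{S}'}$, and unimodularity of the multiplier giving unitarity of $N_{11}$) only flesh out steps the paper takes as standard.
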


\begin{proof} 
First, note that condition \ref{cond:invariance} is satisfied for the
limit equation on the domain $\mathcal{D}_0$.  It suffices to verify that
the closure of $\mathscr{L}^{(\alpha\beta)}$ on $\mathcal{D}_0$ is the
generator of a strongly continuous contraction semigroup for every
$\alpha,\beta\in\mathbb{C}$: existence and uniqueness follows from
remark \ref{rem:fagnola} and \cite[theorem 1.1]{LiW06}, while the fact 
that
$\mathcal{D}_0$ is a core for the corresponding generators is then
trivially satisfied.  Note that we have
$$
	\mathscr{L}^{(\alpha\beta)} =
		\alpha^*\beta ~
	\frac{i\vartheta\,(B+B^\dag)+\gamma/2}
	{i\vartheta\,(B+B^\dag)-\gamma/2}
	-\frac{|\alpha|^2+|\beta|^2}{2}
	+ i\Omega\,B^\dag B.
$$
But as $B^\dag B$ is essentially self-adjoint on $\mathcal{S}$, the 
closure of $i\Omega\,B^\dag B$ on $\mathcal{S}$ is the generator of a 
strongly continuous unitary group (and hence of a contraction semigroup), 
while the first two terms in the expression for 
$\mathscr{L}^{(\alpha\beta)}$ constitute a bounded and dissipative 
perturbation.  The result therefore follows from a standard perturbation 
argument \cite[theorem 3.7]{Dav80}. 
\end{proof}

\begin{lem}
The solution of equation (\ref{eq:hplim}) extends to a unitary cocycle.
\end{lem}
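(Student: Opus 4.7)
The plan is to verify $U_t^* U_t = I$ (with $U_t U_t^* = I$ following from a dual computation) by a quantum It\^o calculation that exploits the fact that $L_1 = M_1 = 0$ in the limit equation. On the formal level, $d(U_t^* U_t) = 0$: using $d\Lambda^{11}_t\,d\Lambda^{11}_t = d\Lambda^{11}_t$ and the absence of $dA$, $dA^\dag$ terms, the coefficient of $d\Lambda^{11}_t$ reduces (at $U_t^* U_t = I$) to
\begin{equation*}
   (N_{11}^\dag - I) + (N_{11} - I) + (N_{11}^\dag - I)(N_{11} - I) = N_{11}^\dag N_{11} - I = 0
\end{equation*}
by the Hudson-Parthasarathy unitarity of $N_{11}$ from condition \ref{cond:invariance}, while the coefficient of $dt$ equals $K + K^\dag = -L_1 L_1^\dag = 0$ since $L_1 = 0$. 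An analogous computation gives $d(U_t U_t^*) = 0$, with $N_{11} N_{11}^\dag = I$ in place of $N_{11}^\dag N_{11} = I$.

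To justify this rigorously I would apply the first fundamental formula of Hudson-Parthasarathy to the matrix elements $\psi_{u,v,f,g}(t) := \langle u \otimes e(f),\,(U_t^* U_t - I)\,v \otimes e(g)\rangle$ for $u, v \in \mathcal{D}_0$ and $f, g \in \mathfrak{S}$. After substituting the coefficients and performing the algebraic cancellations above, one obtains a closed coupled linear integral equation for the family $\{\psi_{u,v,f,g}\}$ in which the unbounded coefficient $K$ enters only through shifts $u \mapsto K u$, $v \mapsto K v$ of the test arguments; this rearrangement is possible because $K^\dag = -K$ lets the commutator $[U_s^* U_s - I,\,K]$ be moved onto the test vectors via adjoint shifting. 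Domain invariance of $\mathcal{D}_0$ under $N_{11}, N_{11}^\dag, K, K^\dag$ ensures the system closes within this family. Since every $\psi_{u,v,f,g}(0) = 0$, and the kernel of the integral system is locally bounded once $f, g$ and the test arguments are fixed, a Picard iteration yields $\psi_{u,v,f,g} \equiv 0$. By the density of $\mathcal{D}_0\oten\mathcal{E}$ and the boundedness of $U_t^* U_t - I$, one concludes $U_t^* U_t = I$ on $\mathcal{H}_0 \otimes \mathcal{F}$; the analogous argument gives $U_t U_t^* = I$, and therefore $U_t$ is unitary, upgrading the already-established contraction cocycle structure to a unitary cocycle.

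The main obstacle is organising the bookkeeping of the first fundamental formula so that the cancellations $N_{11}^\dag N_{11} = I$ and $K + K^\dag = 0$ are manifest on $\mathcal{D}_0$ before any norm estimate involving the unbounded coefficient $K$ is attempted. Because these cancellations are exact algebraic identities rather than limits, the proof reduces to the well-posedness of a linear integral equation with locally bounded kernel (the $K$-contributions being absorbed as shifts of test vectors), and therefore requires no additional regularity beyond the domain invariance of $\mathcal{D}_0$ already established.
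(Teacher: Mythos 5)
There is a genuine gap at the heart of the proposal: the step ``a Picard iteration yields $\psi_{u,v,f,g}\equiv 0$.'' The integral system you derive is not closed in any space on which a Gronwall/Picard argument can run: each application of the second fundamental formula shifts the test vectors by the coefficients, so after $n$ iterations the bound on the iterated integral involves $\|X(1)\cdots X(n)u\|$ with each $X(\cdot)\in\{K,K^\dag,N_{11},N_{11}^\dag\}$, and convergence of the iteration requires factorial-type growth control of exactly the kind demanded by the Fagnola conditions of remark \ref{rem:fagnola}. Those are precisely what this example was constructed to lack: $K=i\Omega B^\dag B$ and a generic Schwartz vector in $\mathcal{D}_0=\mathcal{S}\oten\mathbb{C}\varphi_0$ is \emph{not} an analytic vector for $B^\dag B$ (e.g.\ coefficients $e^{-\sqrt j}$ in the Hermite basis give $\|(B^\dag B)^n u\|$ growing like $n^{2n}$), so ``locally bounded kernel once the test arguments are fixed'' does not control the iterates, and the commutators $[B^\dag B, N_{11}]$ prevent any easy reduction to a nicer invariant core. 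There is also a structural reason the argument cannot work as stated: it uses only the algebraic identities $K+K^\dag=0$, $N_{11}^\dag N_{11}=N_{11}N_{11}^\dag=I$ on the invariant domain. If one replaces $K$ by $iH$ with $H$ symmetric, domain-invariant but not essentially self-adjoint, all of those identities persist while the unique contraction solution is in general not unitary (this is the ``explosion'' caveat stated after condition \ref{cond:core}); hence any correct proof must use essential self-adjointness of $B^\dag B$ on $\mathcal{S}$, which your argument never invokes. Finally, the ``dual computation'' for $U_tU_t^*$ presupposes a QSDE for $U_t^*$ (the left equation), which need not be well defined on the fixed domain $\mathcal{D}_0\oten\mathcal{E}$, as the paper itself points out.

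The paper avoids all of this by an interaction-picture construction rather than an It\^o unitarity computation: it takes $S_t$ to be the unitary group generated by the closure of $i\Omega B^\dag B$ (this is where essential self-adjointness on $\mathcal{S}$ enters), solves the time-dependent Hudson--Parthasarathy equation with \emph{bounded} coefficient $S_sN_{11}S_s^\dag-I$ to get a unitary $R_t$, checks by quantum stochastic calculus that $U_t=R_tS_t$ solves (\ref{eq:hplim}), and then invokes the uniqueness already established in the preceding lemma to conclude that the contraction cocycle solution coincides with this unitary process. If you want to salvage a direct approach, you would have to import analytic input of this type (e.g.\ restrict to a core of analytic vectors for all coefficient products and prove such a core exists, or factor out the unbounded skew-adjoint part as the paper does); the purely algebraic cancellations plus domain invariance are not enough.
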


\begin{proof}
As we have shown that there is a unique solution, it suffices to show that 
there exists a unitary process $U_t$ that satisfies equation 
(\ref{eq:hplim}).  To this end, consider the equations
$$
	S_t  = I + \int_0^t S_s\,i\Omega\,B^\dag B\,ds,\qquad
	R_t = I + \int_0^t R_s\,(S_sN_{11}S_s^\dag-I)\,d\Lambda_s.
$$
Then $S_t$ is the strongly continuous unitary group generated by the 
closure of $i\Omega B^\dag B$, while $R_t$ satisfies a Hudson-Parthasarathy
equation with bounded, though time dependent, coefficients.
The latter possesses a unitary solution as can be verified, e.g., using 
the results in \cite{Hol92}.  Moreover, using the quantum stochastic 
calculus, it is immediately verified that $U_t=R_tS_t$ satisfies equation 
(\ref{eq:hplim}), and $R_tS_t$ is clearly unitary.  
\end{proof}

We have finally verified all the conditions 
\ref{cond:invariance}--\ref{cond:core} and assumptions 
\ref{aspt:sing}--\ref{aspt:coeff}.  We can thus invoke theorem 
\ref{thm:sing}, and we find that indeed
\begin{equation*}
    \lim_{k \to \infty} \sup_{0 \le t \le T} 
    \|U_t^{(k)*}\psi - U_t^*\psi\| 
	= 0\qquad
	\mbox{for all }\psi\in\mathcal{H}_0\otimes\mathcal{F}
\end{equation*}   
with the prelimit and limit coefficients as define above.

\subsection{Finite dimensional approximations}

Suppose that we are given a quantum stochastic differential equation
of the form (\ref{eq:hplim}) with $\mathrm{dim}\,\mathcal{H}_0=\infty$.
Though such an equation may be a realistic physical model, it can not be 
simulated directly on a computer.  To perform numerical computations 
(typically one would simulate a derivative of this equation, such as a 
master equation or a quantum filtering equation), we must first 
approximate this infinite dimensional equation by one which is finite 
dimensional.  A very common way of doing this is to fix an orthonormal 
basis $\{\psi_\ell\}_{\ell\ge 0}\subset\mathcal{D}_0$, so that we can 
approximate the coefficients in the equation for $U_t$ by their 
truncations with respect to the first $k$ basis elements.  We will show 
that the solutions of the truncated equations do in fact converge to $U_t$ 
as $k\to\infty$.  Though the result is of some interest in itself, it also 
serves as an exceedingly simple demonstration of the Trotter-Kato theorem 
\ref{thm:qsdetk} which is not of the singular perturbation type (theorem
\ref{thm:sing}).

In the current setting, we will simply set $\mathcal{H}_0=\mathcal{H}$.  
We presume that the equation (\ref{eq:hplim}) is given on the domain 
$\mathcal{D}_0\oten\mathcal{E}$ and that it satisfies conditions 
\ref{cond:invariance}--\ref{cond:core}.  We also presume that 
$\mathcal{D}_0=\mathrm{span}\{\psi_\ell:\ell\in\mathbb{Z}_+\}$, where
$\{\psi_\ell\}_{\ell\ge 0}$ is an orthonormal basis of $\mathcal{H}$.
For simplicity, let us assume that $N_{ij}=\delta_{ij}$; this is not 
essential, see remark \ref{rem:trunc} below.

Define $P^{(k)}$ to be the orthogonal projection onto
$\mathrm{span}\{\psi_0,\ldots,\psi_k\}$.  We proceed to define the
equations (\ref{eq:hpprelim}) by truncating the coefficients. Since the
truncated operators will be bounded, we set $\mathcal{D}=\mathcal{H}$ and
we let $M_i^{(k)}, L_i^{(k)}$ and $K^{(k)}$ in equation
\eqref{eq:hpprelim} be given by
$$
  L_i^{(k)} = P^{(k)}L_iP^{(k)},\qquad
    M_i^{(k)} = - L_i^{(k)*},\qquad
  K^{(k)} = P^{(k)}KP^{(k)}
$$
for all $1\le i\le n$.  Note that condition \ref{cond:invariance} is thus 
satisfied, and as the coefficients are bounded condition 
\ref{cond:unitarity} is as well.  Conditions 
\ref{cond:invariance}--\ref{cond:core} are satisfied for the limit 
equation by assumption.

\begin{prop}[Finite dimensional approximation]
Under the above assumptions, the truncated equations (\ref{eq:hpprelim}) 
converge to the exact equation (\ref{eq:hplim}):
\begin{equation*}
    \lim_{k \to \infty} \sup_{0 \le t \le T} 
    \|U_t^{(k)*}\psi - U_t^*\psi\| 
	= 0\qquad
	\mbox{for all }\psi\in\mathcal{H}\otimes\mathcal{F}.
\end{equation*}   
\end{prop}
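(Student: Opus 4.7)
The plan is to apply theorem \ref{thm:qsdetk}, which reduces the desired uniform strong convergence of $U_t^{(k)*}$ to $U_t^*$ (condition (b)) to verifying the generator-level condition (a). Since $\mathcal{D}=\mathcal{H}$ and the prelimit coefficients $K^{(k)}, L_i^{(k)}, M_i^{(k)}$ are bounded, we have $\mathrm{Dom}(\mathscr{L}^{(k;\alpha\beta)}) = \mathcal{H}$ for every $k$ and every $\alpha,\beta\in\mathbb{C}^n$. In particular, for any fixed $u\in\mathcal{D}_0$ the simplest possible choice is $u^{(k)}=u$, so that $u^{(k)}\to u$ is automatic and only the convergence of the generators must be verified.

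From (\ref{eq:geninvdomain}), together with the implicit choice $N_{ij}^{(k)} = \delta_{ij}$ (the trivial extension of $N_{ij}=\delta_{ij}$ to $\mathcal{H}$, consistent with condition \ref{cond:invariance}) and the identities $M_i = -L_i^\dag$, $M_i^{(k)} = -L_i^{(k)*}$, the two generators take the parallel form
\begin{align*}
\mathscr{L}^{(\alpha\beta)} u &= \Bigl(\langle\alpha,\beta\rangle + \sum_i \beta_i L_i - \sum_i \alpha_i^* L_i^\dag + K - \tfrac{|\alpha|^2+|\beta|^2}{2}\Bigr) u, \\
\mathscr{L}^{(k;\alpha\beta)} u &= \Bigl(\langle\alpha,\beta\rangle + \sum_i \beta_i L_i^{(k)} - \sum_i \alpha_i^* L_i^{(k)*} + K^{(k)} - \tfrac{|\alpha|^2+|\beta|^2}{2}\Bigr) u.
\end{align*}
For $u=\sum_{\ell=0}^N c_\ell \psi_\ell\in\mathcal{D}_0$ and $k\ge N$ one has $P^{(k)}u=u$, so that $L_i^{(k)}u = P^{(k)}L_i u$, $L_i^{(k)*}u = P^{(k)}L_i^\dag u$, and $K^{(k)}u = P^{(k)}Ku$.

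It now suffices to note that $P^{(k)}\to I$ strongly on $\mathcal{H}$, since $\{\psi_\ell\}$ is an orthonormal basis. Consequently each of $P^{(k)}L_i u$, $P^{(k)}L_i^\dag u$, and $P^{(k)}K u$ converges in norm to $L_i u$, $L_i^\dag u$, and $Ku$ respectively, and combining these convergences yields $\mathscr{L}^{(k;\alpha\beta)} u \to \mathscr{L}^{(\alpha\beta)} u$. This verifies condition (a) of theorem \ref{thm:qsdetk}, and the desired conclusion follows. There is no serious obstacle in this argument: the QSDE Trotter-Kato theorem delivers the result essentially for free. The only mildly subtle point is that $L_i u$ and $L_i^\dag u$ generally lie outside $\mathrm{range}(P^{(k)})$ for a given $k$, so that one must genuinely use strong convergence of the projections, rather than the easier fact that $P^{(k)}u=u$ eventually, to handle those terms.
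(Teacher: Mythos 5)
Your proof is correct and follows the same overall strategy as the paper: apply Theorem \ref{thm:qsdetk}, reduce to checking condition (a), and choose $u^{(k)}$ to be (essentially) $u$ itself. However, the closing remark, where you flag as a "mildly subtle point" that $L_i u$ and $L_i^\dag u$ may lie outside $\mathrm{range}(P^{(k)})$ for every $k$, so that one must invoke strong convergence of the projections, rests on a misreading of the hypotheses. By Condition \ref{cond:invariance} the operators $L_i$, $L_i^\dag$, $K$, $K^\dag$ leave $\mathcal{D}_0 = \mathrm{span}\{\psi_\ell:\ell\in\mathbb{Z}_+\}$ invariant, and this is the \emph{algebraic} span, so every element is a \emph{finite} linear combination of the $\psi_\ell$. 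Hence for a fixed $u\in\mathcal{D}_0$ each of $L_i u$, $L_i^\dag u$, $Ku$ lies in $\mathrm{span}\{\psi_0,\ldots,\psi_C\}$ for some finite $C$, and therefore $P^{(k)} L_i u = L_i u$, $P^{(k)}L_i^\dag u = L_i^\dag u$, $P^{(k)}Ku = Ku$ exactly for all $k\ge C$. This is precisely the shortcut the paper's proof takes: for $k$ large enough one has the identity $\mathscr{L}^{(k;\alpha\beta)}u^{(k)} = \mathscr{L}^{(\alpha\beta)}u$, not merely convergence. Your argument via $P^{(k)}\to I$ strongly is of course also valid, and indeed it is the argument one genuinely needs in the more general situation of Remark \ref{rem:trunc}, where $\mathrm{span}\{\psi_\ell\}$ is only a core for the generators rather than an invariant domain for the coefficients; but under the stated assumptions it is doing more work than necessary.
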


\begin{proof}
By theorem \ref{thm:qsdetk}, it suffices to show that for every
$\alpha,\beta\in\mathbb{C}^n$ and $u\in\mbox{span}\{\psi_\ell\}_{\ell \ge 
0}$, there exists a sequence $\{u^{(k)}\}\subset\mathcal{H}$ such that
$$
	u^{(k)}\xrightarrow{k\to\infty}u,\qquad
	\mathscr{L}^{(k;\alpha\beta)}u^{(k)}\xrightarrow{k\to\infty}
	\mathscr{L}^{(\alpha\beta)}u.
$$
We may simply take $u^{(k)} = P^{(k)}u$. Since $u$ is an element 
of the linear span of $\{\psi_\ell\}_{\ell\ge 0}$, there 
is a $C \in \mathbb{N}$ such that $u^{(k)} = u$ and
$\mathscr{L}^{(k;\alpha\beta)}u^{(k)}=\mathscr{L}^{(\alpha\beta)}u$ for 
all $k\ge C$. 
\end{proof}

\begin{rem}
\label{rem:trunc}
The assumption that $N_{ij}=\delta_{ij}$ is not necessary for the result 
to hold; however, care must be taken to approximate $N_{ij}$ in such a way 
that the Hudson-Parthasarathy relations of condition \ref{cond:unitarity} 
remain satisfied (simple truncation as above typically does not achieve 
this purpose).  Similarly, it is not difficult to show that the result 
still holds if only $\mathrm{span}\{\psi_\ell:\ell\in\mathbb{Z}_+\}
\subset\mathcal{D}_0$, provided that 
$\mathrm{span}\{\psi_\ell:\ell\in\mathbb{Z}_+\}$ is a core for
$\mathscr{L}^{(\alpha\beta)}$, $\alpha,\beta\in\mathbb{C}^n$.
The chief technical advantage of this simple result compared to, e.g., an 
application of the results in \cite{LiW07}, is the very strong nature of 
the convergence.
\end{rem}

\section{Proof of the main convergence theorem}
\label{sec:proofmain}

Before turning to the proof of the main convergence theorem, we provide a 
proof of the semigroup properties of $T_t^{(\alpha\beta)}$, lemma 
\ref{lem:gens}.

\begin{proof}[Proof of Lemma \ref{lem:gens}]
As $U_t$ is a contraction cocycle, it follows easily that $T_t^{(\alpha\beta)}$ 
is a strongly continuous contraction semigroup.  As
$\alpha I_{[0,t]}\in\mathfrak{S}$ for every $\alpha\in\mathbb{C}^n$ and 
$t\ge 0$ (recall that we assume that $\mathfrak{S}$ contains all simple 
functions), we obtain from (\ref{eq:hpprelim}) for $u,v\in\mathcal{D}_0$
\begin{multline*}
	\langle u\otimes e(\alpha I_{[0,t]}),
	U_t\,v\otimes e(\beta I_{[0,t]})\rangle = \\
	e^{\alpha^*\beta\,t}\left[\langle u,v\rangle +
	\int_0^t e^{-\alpha^*\beta\,s}
	\langle u\otimes e(\alpha I_{[0,s]}),
	U_s\,(\overline{\mathscr{L}}^{(\alpha\beta)}v\otimes e(\beta 
		I_{[0,s]}))\rangle
	\,ds\right],
\end{multline*}
where we have written for $u\in\mathcal{D}_0$
\begin{equation*}
	\overline{\mathscr{L}}^{(\alpha\beta)}u =
	\left(\sum_{i,j=1}^n \alpha_i^* (N_{ij}-\delta_{ij})\beta_j +
	\sum_{i=1}^n \alpha^*_i M_i +
	\sum_{i=1}^n L_i \beta_i + K
	\right)u.
\end{equation*}
Therefore, we obtain for $u,v\in\mathcal{D}_0$ using the chain rule
\begin{equation}
\label{eq:weakgen}
	\langle u,T_t^{(\alpha\beta)}v\rangle =
	\langle u,v \rangle +
	\int_0^t 
	\langle u,T_s^{(\alpha\beta)}\,\mathscr{L}^{(\alpha\beta)}v\rangle 
	\,ds,
\end{equation}
with $\mathscr{L}^{(\alpha\beta)}v$ defined as in equation 
(\ref{eq:geninvdomain}).  Using the fact that $\mathcal{D}_0$ is dense in 
$\mathcal{H}_0$ and that 
$$
	|\langle u,T_s^{(\alpha\beta)}\,\mathscr{L}^{(\alpha\beta)}v\rangle|
	\le \|u\|\,\|T_s^{(\alpha\beta)}\,\mathscr{L}^{(\alpha\beta)}v\|\le
	\|u\|\,\|\mathscr{L}^{(\alpha\beta)}v\|,
$$
we find using dominated convergence that (\ref{eq:weakgen}) holds 
identically for all $u\in\mathcal{H}_0$, $v\in\mathcal{D}_0$, and $t\ge 
0$.  Moreover, as $\|T_t^{(\alpha\beta)}\mathscr{L}^{(\alpha\beta)}v\|\le
\|\mathscr{L}^{(\alpha\beta)}v\|$ is bounded, it follows that 
$T_t^{(\alpha\beta)}\mathscr{L}^{(\alpha\beta)}v$
is Bochner integrable and thus evidently
(see, e.g., \cite[section 1.2]{Dyn65})
$$
	T_t^{(\alpha\beta)}v =
	v + 
	\int_0^t 
	T_s^{(\alpha\beta)}\,\mathscr{L}^{(\alpha\beta)}v 
	\,ds\qquad
	\forall\,t\ge 0,~v\in\mathcal{D}_0.
$$
But then we obtain using the
strong continuity of $T_t^{(\alpha\beta)}$
$$
	\lim_{t\searrow 0}\frac{T_t^{(\alpha\beta)}v-v}{t} =
	\mathscr{L}^{(\alpha\beta)}v\qquad
	\forall\, v\in\mathcal{D}_0,
$$
which establishes the claim.  The result for $T_t^{(k;\alpha\beta)}$
follows identically. 
\end{proof}

The remainder of this section is devoted to the proof of theorem
\ref{thm:qsdetk}.  We first prove weak convergence of $U_t^{(k)}$ to $U_t$
for every $t<\infty$; this is the content of the following lemma. The weak
convergence will subsequently be strengthened to strong convergence, and
ultimately to strong convergence uniformly on compact time intervals.

\begin{lem}
If for every $\alpha,\beta\in\mathbb{C}^n$ and
$u\in\mathcal{D}_0$, there exist
$u^{(k)}\in\mathrm{Dom}(\mathscr{L}^{(k;\alpha\beta)})$ so that
$$
        u^{(k)}\xrightarrow{k\to\infty}u,\qquad
        \mathscr{L}^{(k;\alpha\beta)}u^{(k)}\xrightarrow{k\to\infty}
        \mathscr{L}^{(\alpha\beta)}u,
$$
then $\langle\psi_1,U_t^{(k)}\psi_2\rangle\xrightarrow{k\to\infty}
\langle\psi_1,U_t\psi_2\rangle$ for every 
$\psi_1,\psi_2\in\mathcal{H}_0\otimes\mathcal{F}$ and $t<\infty$.
\end{lem}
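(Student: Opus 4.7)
The plan is to transfer the generator hypothesis to convergence of the associated semigroups via the classical Trotter--Kato theorem, and then propagate this convergence back to matrix elements of $U_t^{(k)}$ using the cocycle structure together with routine density arguments.

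First, by Lemma~\ref{lem:gens} and Condition~\ref{cond:core}, the hypothesis of the lemma is precisely condition~(a) of Theorem~\ref{thm Trotter-Kato} applied to the contraction semigroups $T_t^{(k;\alpha\beta)}$ on $\mathcal{H}$ and $T_t^{(\alpha\beta)}$ on $\mathcal{H}_0$, for each fixed $\alpha,\beta\in\mathbb{C}^n$. The classical Trotter--Kato theorem then yields
\[
\sup_{0\le s\le T}\|T_s^{(k;\alpha\beta)}v-T_s^{(\alpha\beta)}v\|\xrightarrow{k\to\infty}0
\qquad\forall\,v\in\mathcal{H}_0,~T<\infty,~\alpha,\beta\in\mathbb{C}^n.
\]

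Next I would pass from this semigroup convergence to convergence of matrix elements $\langle u\otimes e(f),U_t^{(k)}v\otimes e(g)\rangle$ for $u,v\in\mathcal{H}_0$ and simple $f,g\in\mathfrak{S}$. Given a common partition $0=t_0<t_1<\cdots<t_N=t$ on which $f=\alpha_i$ and $g=\beta_i$, iterating the cocycle identity $U_{s+t}=U_s(I\otimes\Theta_s^*U_t\Theta_s)$ and factoring each exponential vector along the same partition through the continuous tensor product structure of $\mathcal{F}$ produces, after computing the partial Fock inner products on successive intervals, the telescoped identity
\[
\langle u\otimes e(f),U_t v\otimes e(g)\rangle
=\prod_{i=1}^N e^{(|\alpha_i|^2+|\beta_i|^2)(t_i-t_{i-1})/2}\,
\langle u,T_{t_1-t_0}^{(\alpha_1\beta_1)}\cdots T_{t_N-t_{N-1}}^{(\alpha_N\beta_N)}v\rangle,
\]
together with the analogous identity in which $U_t$ and $T^{(\alpha_i\beta_i)}$ are replaced by $U_t^{(k)}$ and $T^{(k;\alpha_i\beta_i)}$ (valid here for arbitrary $u,v\in\mathcal{H}$). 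Since each factor $T_{t_i-t_{i-1}}^{(k;\alpha_i\beta_i)}$ converges strongly on $\mathcal{H}_0$ by the first step, $T^{(\alpha_i\beta_i)}$ preserves $\mathcal{H}_0$, and all of the semigroups involved are contractions, a short telescoping estimate gives convergence of the full iterated composition, and hence of the matrix elements on all simple $f,g$ and all $u,v\in\mathcal{H}_0$.

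Finally I would remove the remaining restrictions in two density steps. Simple functions are dense in $\mathfrak{S}$ in the $L^2$-topology and $h\mapsto e(h)$ is norm-continuous on bounded sets, so the uniform bound $\|U_t^{(k)}\|\le 1$ together with an $\varepsilon/3$ argument extends the matrix element convergence to all $f,g\in\mathfrak{S}$. A second $\varepsilon/3$ argument using totality of $\{u\otimes e(f):u\in\mathcal{H}_0,\,f\in\mathfrak{S}\}$ in $\mathcal{H}_0\otimes\mathcal{F}$ and again the contraction bound then yields $\langle\psi_1,U_t^{(k)}\psi_2\rangle\to\langle\psi_1,U_t\psi_2\rangle$ for all $\psi_1,\psi_2\in\mathcal{H}_0\otimes\mathcal{F}$. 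The main technical point is the cocycle-plus-tensor-product factorisation in the second step; everything else reduces to the classical Trotter--Kato theorem and standard approximation estimates.
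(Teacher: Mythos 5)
Your proposal is correct and follows essentially the same route as the paper: reduce by totality (and adaptedness) to matrix elements between vectors $u\otimes e(f)$ with simple $f$, factor these via the cocycle property into products of the semigroups $T^{(k;\alpha\beta)}$ and $T^{(\alpha\beta)}$, and conclude by the classical Trotter--Kato theorem plus a telescoping contraction estimate. The only cosmetic differences are the order of the density and factorization steps and your superfluous intermediate extension from simple $f,g$ to all of $\mathfrak{S}$.
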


\begin{proof}
As $U_t$ and $U_t^{(k)}$ are adapted, it suffices to restrict our 
attention to $\mathcal{H}_0\otimes\mathcal{F}_{t]}$.  Let $\mathcal{U}\subset
\mathcal{H}_0\otimes\mathcal{F}_{t]}$ be a total subset.
Then for any $\psi_1,\psi_2\in\mathcal{H}_0\otimes\mathcal{F}_{t]}$ and
$\varepsilon>0$, there exist $\psi_1',\psi_2'\in\mathrm{span}\,\mathcal{U}$
such that $\|\psi_1-\psi_1'\|<\varepsilon$,
$\|\psi_2-\psi_2'\|<\varepsilon$, and
\begin{multline*}
	\limsup_{k\to\infty}|\langle\psi_1,(U_t^{(k)}-U_t)\psi_2\rangle|
	\\ \le 2\,\|\psi_1-\psi_1'\|\,\|\psi_2\| +
		2\,\|\psi_2-\psi_2'\|\,\|\psi_1'\| +
	\limsup_{k\to\infty}|\langle\psi_1',(U_t^{(k)}-U_t)\psi_2'\rangle| \\
	\le 2\varepsilon\,(\|\psi_1\|+\|\psi_2\|) + 2\varepsilon^2
		+ 
	\limsup_{k\to\infty}|\langle\psi_1',(U_t^{(k)}-U_t)\psi_2'\rangle|.
\end{multline*}
It thus suffices to prove that the limit on the right vanishes for every
$\psi_1',\psi_2'\in\mathrm{span}\,\mathcal{U}$, i.e., it 
suffices to prove the result for $\psi_1,\psi_2\in\mathcal{U}$ only.

Now consider the total subset $\mathcal{U}=\{u\otimes e(f):
u\in\mathcal{D}_0,~f\in\mathfrak{S}'\}$, where $\mathfrak{S}'$ is the set 
of simple functions in $L^2([0,t];\mathbb{C}^n)$ (recall that
by assumption $\mathfrak{S}'$ is admissible, so that
$\mathcal{U}\subset\mathcal{D}_0\oten\mathcal{E}_{t]}$).
Let $\psi_i\in\mathcal{U}$, $i\in\{1,2\}$.  Then there exist 
$u_i\in\mathcal{D}_0$, $0=t_0<t_1<\cdots<t_m<t_{m+1}=t$, and 
$\alpha_{i0},\ldots,\alpha_{im}\in\mathbb{C}^n$ such that 
$\psi_i=u_i\otimes e(f_i)$ with $f_i(s)=\alpha_{ij}$ for $s\in 
[t_j,t_{j+1}[\mbox{}$.  It is not difficult to verify that, by virtue
of the cocycle property, 
$$
	\langle\psi_1,U_t^{(k)}\psi_2\rangle =
	\|e(f_1)\|\,\|e(f_2)\|\,
	\langle u_1,T_{t_1}^{(k;\alpha_{10}\alpha_{20})}
		T_{t_2-t_1}^{(k;\alpha_{11}\alpha_{21})}\cdots
		T_{t-t_m}^{(k;\alpha_{1j}\alpha_{2j})}u_2\rangle,
$$
and similarly for $\langle\psi_1,U_t\psi_2\rangle$.  In particular,
the result follows as
\begin{equation*}
\begin{split}
	&|\langle\psi_1,(U_t^{(k)}-U_t)\psi_2\rangle| \\
	&\qquad =
	\|e(f_1)\|\,\|e(f_2)\|\,|\langle u_1,
		(T_{t_1}^{(k;\alpha_{10}\alpha_{20})}\cdots
		T_{t-t_m}^{(k;\alpha_{1j}\alpha_{2j})}
		- T_{t_1}^{(\alpha_{10}\alpha_{20})}\cdots
		T_{t-t_m}^{(\alpha_{1j}\alpha_{2j})})u_2\rangle| \\
	&\qquad
	\le
	\|e(f_1)\|\,\|e(f_2)\|\,\|u_1\|\,
	\|(T_{t_1}^{(k;\alpha_{10}\alpha_{20})}\cdots
		T_{t-t_m}^{(k;\alpha_{1j}\alpha_{2j})}
		- T_{t_1}^{(\alpha_{10}\alpha_{20})}\cdots
		T_{t-t_m}^{(\alpha_{1j}\alpha_{2j})})u_2\|,
\end{split}
\end{equation*}
which converges to zero as $k\to\infty$ by the Trotter-Kato theorem.
\end{proof}

\begin{proof}[Proof of Theorem \ref{thm:qsdetk} (a)$\Rightarrow$(b)]
Weak convergence of $U_t^{(k)}$ to $U_t$ was proved in the preceding 
lemma.  But note that as $U_t^{(k)}$ are contractions and $U_t$ is
unitary, we can write
$$
	\|(U_t^{(k)}-U_t)^*\psi\|^2 =
	\langle (U_t^{(k)}-U_t)^*\psi,(U_t^{(k)}-U_t)^*\psi\rangle 
	\le 
	2\|\psi\|^2 - 2\,\mathrm{Re}(\langle
	U_t^{(k)*}\psi,U_t^*\psi\rangle).
$$
As $\langle\psi,U_t^{(k)}U_t^*\psi\rangle\to\|\psi\|^2$ as $k\to\infty$ 
follows from the previous lemma, we immediately obtain strong convergence.  
It thus remains to prove that strong convergence for every fixed time $t$ 
can be strengthened to strong convergence uniformly on compact time 
intervals.  To this end, we will appeal again to the Trotter-Kato theorem 
in a slightly different manner.

It is convenient to extend the Fock space to two-sided time,
i.e., we will consider the ampliations of all our operators to the
extended Fock space $\mathcal{\tilde F}=\Gamma_s(L^2(\mathbb{R};\mathbb{C}^n))
\cong\mathcal{F}_-\otimes\mathcal{F}$, where $\mathcal{F}_-\cong\mathcal{F}$ 
is the negative time portion of the two-sided Fock space.  We now define
the two-sided shift $\tilde\theta_t:L^2(\mathbb{R};\mathbb{C}^n)\to
L^2(\mathbb{R};\mathbb{C}^n)$ as $\tilde\theta_tf(s)=f(t+s)$, and by
$\tilde\Theta_t:\mathcal{\tilde F}\to\mathcal{\tilde F}$ its second 
quantization.  Note that $\tilde\Theta_t$ is a strongly continuous 
one-parameter unitary group, and that the cocycle property reads 
$U_{t+s}=U_s\tilde\Theta_s^*U_t\tilde\Theta_s$, etc., in terms of the 
two-sided shift.  Now define on the two-sided Fock space the operators
$$
	V_t^{(k)}=\tilde\Theta_t U_t^{(k)*},\qquad\qquad
	V_t=\tilde\Theta_tU_t^*.
$$
Then it is immediate from the cocycle property that $V_t^{(k)}$ and $V_t$ 
define strongly continuous contraction semigroups on 
$\mathcal{H}\otimes\mathcal{\tilde F}$ and 
$\mathcal{H}_0\otimes\mathcal{\tilde F}$, respectively, whose generators 
we will denote as $\mathscr{\tilde L}^{(k)}$ and $\mathscr{\tilde L}$ (see
\cite{Gre01} for a detailed study in the bounded case).  Moreover,
$$
	\|U_t^{(k)*}\psi - U_t^*\psi\| =
	\frac{\|V_t^{(k)}\,\psi_-\otimes\psi - V_t\,\psi_-\otimes\psi\|}
	{\|\psi_-\|}\qquad
	\forall\,\psi\in\mathcal{H}_0\otimes\mathcal{F},
	~\psi_-\in\mathcal{F}_-
$$
as $\tilde\Theta_t$ is an isometry, where 
$\psi_-\otimes\psi\in\mathcal{F}_-\otimes\mathcal{H}_0\otimes\mathcal{F}
\cong\mathcal{H}_0\otimes\mathcal{\tilde F}$.  Hence, by the Trotter-Kato 
theorem, it suffices to show the following: for any 
$\psi\in\mathrm{Dom}(\mathscr{L})$, there exist 
$\psi^{(k)}\in\mathrm{Dom}(\mathscr{L}^{(k)})$ such that
$\psi^{(k)}\to\psi$ and $\mathscr{L}^{(k)}\psi^{(k)}\to\mathscr{L}\psi$
as $k\to\infty$.  Let us prove this assertion.  Fix
$\psi\in\mathrm{Dom}(\mathscr{L})$ and $\lambda>0$, and define
$$
	\psi^{(k)} = \int_0^\infty e^{-\lambda t}V_t^{(k)}(\lambda\psi-
		\mathscr{L}\psi)\,dt = 
	R_\lambda^{(k)}(\lambda\psi-\mathscr{L}\psi),
$$
where $R_\lambda^{(k)}$ is the resolvent of $V_t^{(k)}$.  We have already 
shown that $\|U_t^{(k)*}\psi - U_t^*\psi\|\to 0$ as $k\to\infty$ for every 
fixed $t\ge 0$ and $\psi\in\mathcal{H}_0\otimes\mathcal{F}$, so that 
evidently
$$
	\|V_t^{(k)}\psi - V_t\psi\|\xrightarrow{k\to\infty} 0
	\qquad \forall\,\psi\in\mathcal{H}_0\otimes\mathcal{\tilde F},
	~t\ge 0.
$$
Therefore, we find by dominated convergence that
$$
	\psi^{(k)} \xrightarrow{k\to\infty}
	\int_0^\infty e^{-\lambda t}V_t(\lambda\psi-
		\mathscr{L}\psi)\,dt = 
	R_\lambda(\lambda\psi-\mathscr{L}\psi)=\psi,
$$
where we have used a standard result on resolvents \cite[section 2.1]{Dav80}.
Similarly, we find that 
$$
	\psi^{(k)}\in\mathrm{Dom}(\mathscr{L}^{(k)}),\qquad
	\mathscr{L}^{(k)}\psi^{(k)}=\mathscr{L}^{(k)}R_\lambda^{(k)}
	(\lambda\psi-\mathscr{L}\psi)=\mathscr{L}\psi+\lambda(\psi^{(k)}-\psi)
$$ 
for every $k\in\mathbb{N}$ by virtue of another standard result on 
resolvents \cite[\textit{loc.\ cit.}]{Dav80}.  Thus evidently
$\mathscr{L}^{(k)}\psi^{(k)}\to \mathscr{L}\psi$ as $k\to\infty$,
and the proof is complete.
\end{proof}

\begin{proof}[Proof of Theorem \ref{thm:qsdetk} (b)$\Rightarrow$(a)]
Choose $\alpha,\beta\in\mathbb{C}^n$, $v\in\mathcal{H}_0$,
and let $\psi=v\otimes e(\beta I_{[0,t]})$.  We obtain an estimate
on $\|(T_t^{(k;\alpha\beta)}-T_t^{(\alpha\beta)})v\|$ through the 
following steps:
\begin{multline*}
	\|(T_t^{(k;\alpha\beta)}-T_t^{(\alpha\beta)})v\| =
	\sup_{\stackrel{u\in\mathcal{H}}{\|u\|\le 1}}
	|\langle u,(T_t^{(k;\alpha\beta)}-T_t^{(\alpha\beta)})v\rangle| =
	\\ \sup_{\stackrel{u\in\mathcal{H}}{\|u\|\le 1}}
	\frac{|\langle u\otimes e(\alpha I_{[0,t]}),
	(U_t^{(k)}-U_t)\psi\rangle|}
	{e^{(|\alpha|^2+|\beta|^2)t/2}}  \le
	\sup_{\stackrel{\psi'\in\mathcal{H}\otimes\mathcal{F}}{
	\|\psi'\|\le 1}}
	\frac{|\langle\psi',(U_t^{(k)}-U_t)\psi\rangle|}
	{e^{|\beta|^2t/2}} = \\ 
	e^{-|\beta|^2t/2}
	\,\|(U_t^{(k)}-U_t)\psi\|
	\le 
	e^{-|\beta|^2t/2}
	\sqrt{\|\psi\|^2
	- 2\,\mathrm{Re}(\langle U_t^{(k)}\psi,U_t\psi\rangle)}.
\end{multline*}
But by assumption $\|(U_t^{(k)*}- U_t^*)\psi_0\|\to 0$ as $k\to\infty$
for all $\psi_0\in\mathcal{H}_0\otimes\mathcal{F}$, so that in particular
$\langle \psi,U_t^{(k)*}U_t\psi\rangle\to \|\psi\|^2$ as $k\to\infty$.
We thus obtain
$$
	\|(T_t^{(k;\alpha\beta)}-T_t^{(\alpha\beta)})v\| 
	\xrightarrow{k\to\infty}0\qquad
	\forall\,\alpha,\beta\in\mathbb{C}^n,~v\in\mathcal{H}_0,~t<\infty.
$$
Now denote the resolvents $R_\lambda^{(k;\alpha\beta)}=
(\lambda-\mathscr{L}^{(k;\alpha\beta)})^{-1}$.  We fix $\lambda>0$, 
$u\in\mathcal{D}_0$, and define
$$
	u^{(k)}=
	\int_0^\infty e^{-\lambda t}T_t^{(k;\alpha\beta)}
		(\lambda-\mathscr{L}^{(\alpha\beta)})u\,dt =
	R_\lambda^{(k;\alpha\beta)}(\lambda-\mathscr{L}^{(\alpha\beta)})u.
$$
Then it follows as in the proof of (a)$\Rightarrow$(b) that
$u^{(k)}\to u$ and $\mathscr{L}^{(k;\alpha\beta)}u^{(k)}\to
\mathscr{L}^{(\alpha\beta)}u$ as $k\to\infty$ (in particular,
$u^{(k)}\in\mathrm{Dom}(\mathscr{L}^{(k;\alpha\beta)})$).  This 
establishes the claim.

It remains to show that if $\mathcal{D}$ is a core for all
$\mathscr{L}^{(k;\alpha\beta)}$, $k\in\mathbb{N}$, then we may choose 
$\{u^{(k)}\}\subset\mathcal{D}$.  Indeed, let us fix 
$\alpha,\beta\in\mathbb{C}^n$, $u\in\mathcal{D}_0$, and construct
the sequence $u^{(k)}$ as before.  As $\mathcal{D}$ is a core for 
$\mathscr{L}^{(k;\alpha\beta)}$, we find that
$\{(v,\mathscr{L}^{(k;\alpha\beta)}v):v\in\mathcal{D}\}$ is dense
in $\mathrm{Graph}(\mathscr{L}^{(k;\alpha\beta)})$.  Therefore,
for every $k$, we can find a vector $w^{(k)}\in\mathcal{D}$ such that
$$
	\|w^{(k)}-u^{(k)}\|<k^{-1},\qquad
	\|\mathscr{L}^{(k;\alpha\beta)}w^{(k)}-
	\mathscr{L}^{(k;\alpha\beta)}u^{(k)}\|<k^{-1}.
$$
Then $w^{(k)}\to u$ and $\mathscr{L}^{(k;\alpha\beta)}w^{(k)}\to
\mathscr{L}^{(\alpha\beta)}u$, and the proof is complete.
\end{proof}

\section{Proof of the singular perturbation theorem}
\label{sec:singpf}

Before devoting ourselves to the proof of theorem \ref{thm:sing}, let us 
show that the coefficients in assumption \ref{aspt:coeff} satisfy the 
unitarity relations in condition \ref{cond:unitarity}.

\begin{proof}[Proof of Lemma \ref{lem:unitaritysp}]
We must verify the following relations:
$$
	K+K^\dag = -\sum_{i=1}^n L_iL_i^\dag,\qquad
	M_i = -\sum_{j=1}^n N_{ij}L_j^\dag,\qquad
	\sum_{j=1}^n N_{mj}N_{\ell j}^\dag =
	\sum_{j=1}^n N_{jm}^\dag N_{j\ell} =
	\delta_{m\ell}
$$
on $\mathcal{D}_0$.  The problem is that products such
as $L_iL_i^\dag$ will contain terms such as $G_iP_0G_i^\dag$, which can 
not be related directly to the unitarity conditions for $K^{(k)}$, etc.  
The additional conditions in assumption \ref{aspt:coeff} are chosen 
precisely to alleviate this problem, as they allow us to cancel the $P_0$ 
terms sandwiched inside the products.  For example, as
$P_0(G_i - A\tilde YF_i)P_1=0$,
$$
	\sum_{i=1}^n L_iL_i^\dag =
	\sum_{i=1}^n P_0(G_i - A\tilde YF_i)
		(G_i^\dag - F_i^\dag\tilde Y^\dag A^\dag)P_0.
$$
Note that, by applying condition \ref{cond:invariance} and assumption 
\ref{aspt:sing} to $L_i^{(k)}$ and $K^{(k)}$,
\begin{multline*}
	k^2(Y+Y^\dag) + k(A+A^\dag) + B+B^\dag = 
	K^{(k)}+K^{(k)\dag} = \\ -\sum_{i=1}^n L_i^{(k)}L_i^{(k)\dag} =
	-\sum_{i=1}^n (k^2F_iF_i^\dag+kF_iG_i^\dag+kG_iF_i^\dag + 
	G_iG_i^\dag),
\end{multline*}
and as this must hold for all $k$ we obtain
$$
	Y+Y^\dag = -\sum_{i=1}^n F_iF_i^\dag,\qquad
	A+A^\dag = -\sum_{i=1}^n(F_iG_i^\dag+G_iF_i^\dag),\qquad
	B+B^\dag  = -\sum_{i=1}^n G_iG_i^\dag.
$$
Note, in particular, that as $F_i^\dag P_0=0$
(assumption \ref{aspt:structural}(d))
$$
	(A+A^\dag)P_0 = -\sum_{i=1}^nF_iG_i^\dag P_0,\qquad
	P_0(A+A^\dag) = -\sum_{i=1}^nP_0G_iF_i^\dag.
$$
Thus
$$
	\sum_{i=1}^n L_iL_i^\dag =
	P_0\left[(A+A^\dag)\tilde Y^\dag A^\dag + A\tilde Y(A+A^\dag)
	-B-B^\dag-A\tilde Y(Y+Y^\dag)\tilde Y^\dag A^\dag\right]P_0.
$$
Using that $P_0AP_0=0$ (assumption \ref{aspt:structural}(e)), it follows 
that this expression coincides with $-K-K^\dag$ as required.
We now proceed to proving that the relation for $M_i$ holds.  Note that
$$
	\sum_{j=1}^n N_{ij}L_j^\dag =
	\sum_{j,\ell=1}^n P_0W_{i\ell}
		(\delta_{\ell j}+F_\ell^\dag\tilde YF_j)
	(G_j^\dag - F_j^\dag\tilde Y^\dag A^\dag)P_0,
$$
where we have used assumption (\ref{aspt:coeff}) as above.
Using again assumption (\ref{aspt:structural}) and the above relations, we 
obtain through straightforward manipulation
$$
	\sum_{j=1}^n (\delta_{\ell j}+F_\ell^\dag\tilde YF_j)
	(G_j^\dag - F_j^\dag\tilde Y^\dag A^\dag)P_0 =
	(G_\ell^\dag -F_\ell^\dag\tilde YA)P_0,
$$
so that the claim follows directly.  Next, we turn to the relations for 
$N_{ij}$.  We must establish
$$
	\sum_{j,a,b=1}^n P_0 W_{ma}
		(\delta_{aj}+F_a^\dag\tilde YF_j)
		(\delta_{bj}+F_j^\dag\tilde Y^\dag F_b)
 		W_{\ell b}^\dag P_0 = \delta_{m\ell}
$$
on $\mathcal{D}_0$, where we have used assumption (\ref{aspt:coeff}).  
Using assumption (\ref{aspt:structural}) and the above relations as 
before, we obtain through straightforward manipulation
$$
	\sum_{j=1}^n (\delta_{aj}+F_a^\dag\tilde YF_j)
	(\delta_{bj}+F_j^\dag\tilde Y^\dag F_b) =
	\delta_{ba}.
$$
Hence the unitarity relation for $W_{ij}$ establishes the claim.  
It remains to show that
$$
	\sum_{j,a,b=1}^n 
	P_0(\delta_{b\ell}+F_\ell^\dag\tilde Y^\dag F_b)
	W_{jb}^\dag 
	W_{ja}(\delta_{am}+F_a^\dag\tilde YF_m)P_0 
	= \delta_{m\ell}.
$$
Using the unitarity relation for $W_{ij}$ we find that we must show
$$
	\sum_{a=1}^n 
		P_0(\delta_{a\ell}+F_\ell^\dag\tilde Y^\dag F_a)
		(\delta_{am}+F_a^\dag\tilde YF_m)P_0 
	= \delta_{m\ell}.
$$
But this was already established above, and we are done.  
\end{proof}

We now turn to the proof of theorem \ref{thm:sing}. We will apply the
general convergence result, theorem \ref{thm:qsdetk}, which leaves the
problem of finding the appropriate sequences of vectors $u^{(k)}$.  To
this end, we employ a simple but cunning trick due to Kurtz \cite{Kur73}.

\begin{proof}[Proof of Theorem \ref{thm:sing}]
By theorem \ref{thm:qsdetk}, it suffices to show that for every
$\alpha,\beta\in\mathbb{C}^n$ and $u\in\mathcal{D}_0$, there exists a 
sequence $\{u^{(k)}\}\subset\mathcal{D}$ such that
$$
	u^{(k)}\xrightarrow{k\to\infty}u,\qquad
	\mathscr{L}^{(k;\alpha\beta)}u^{(k)}\xrightarrow{k\to\infty}
	\mathscr{L}^{(\alpha\beta)}u.
$$
Let $v\in\mathcal{D}$, and note that in the current setting 
(\ref{eq:geninvdomain}) reads
\begin{multline*}
	\mathscr{L}^{(k;\alpha\beta)}v =
	k^2\,Yv
	+ k\left(
	\sum_{i=1}^n F_i\beta_i
	- \sum_{i,j=1}^n \alpha_i^* W_{ij}F_j^\dag
	+ A
	\right)v \\
	+\left(
	\sum_{i,j=1}^n \alpha_i^*W_{ij}\beta_j
	+ \sum_{i=1}^n G_i\beta_i
	- \sum_{i,j=1}^n \alpha_i^* W_{ij}G_j^\dag
	+ B - \frac{|\alpha|^2+|\beta|^2}{2}
	\right)v \\
	:= (k^2\,Y + k\,A^{(\alpha\beta)}+B^{(\alpha\beta)})v.
\end{multline*}
We will seek $u^{(k)}$ of the form $u^{(k)}=u+k^{-1}u_1+k^{-2}u_2$ with
$u_1,u_2\in\mathcal{D}$.  Note that 
$$
	\mathscr{L}^{(k;\alpha\beta)}u^{(k)} =
	k^2\,Yu + k(Yu_1+A^{(\alpha\beta)}u) 
	+ Yu_2 + A^{(\alpha\beta)}u_1 + B^{(\alpha\beta)}u + o(1).
$$
Evidently the proof is complete if we can choose $u_1,u_2$ in such a way 
that $\mathscr{L}^{(k;\alpha\beta)}u^{(k)} = \mathscr{L}^{(\alpha\beta)}u 
+ o(1)$ for all $\alpha,\beta\in\mathbb{C}^n$ and $u\in\mathcal{D}_0$.  The 
structural requirements (assumption \ref{aspt:structural}) are chosen 
precisely so that this is the case.  First, note that $Yu=0$ by assumption 
\ref{aspt:structural}(b), as $u\in\mathcal{D}_0$.  This takes care of the 
quadratic term.  Next, the linear term ($\propto k$) must clearly vanish.  
Hence it must be the case that
$$
	\mathrm{span}\{Au,W_{ij}F_j^\dag u,F_iu:
	u\in\mathcal{D}_0,~i,j=1,\ldots,n\} \subset
	\{Yv:v\in\mathcal{D}\}.
$$
But note that $W_{ij}F_j^\dag u=0$ for all 
$u\in\mathcal{D}_0$ by assumption \ref{aspt:structural}(d), while 
assumptions \ref{aspt:structural}(d,e) imply that $F_i=P_1F_i$ and
$AP_0=P_1AP_0$.  Thus $F_iv=Y\tilde YF_iv$ and $Av=Y\tilde YAv$ for all 
$v\in\mathcal{D}_0$ by assumption \ref{aspt:structural}(c), which 
establishes the claim.  In particular, we will choose
$u_1 = -\tilde YA^{(\alpha\beta)}u$, which cancels the linear term.
It remains to choose $u_2\in\mathcal{D}$ so that
$$
	\lim_{k\to\infty}\mathscr{L}^{(k;\alpha\beta)}u^{(k)} =
	Yu_2 
	+ (B^{(\alpha\beta)} - A^{(\alpha\beta)}\tilde YA^{(\alpha\beta)})u
	= \mathscr{L}^{(\alpha\beta)}u.
$$
Note that even if we did not know in advance the form of 
$K,L_i,M_i,N_{ij}$, we must require that
$\mathscr{L}^{(\alpha\beta)}u\in\mathcal{H}_0$ for every 
$u\in\mathcal{D}_0$ in order that the limit 
equation leaves $\mathcal{H}_0$ invariant.  Hence, if the limit of the 
slow dynamics exists, we must be able to choose $u_2$ so that
$$
	P_1Yu_2 = -P_1(B^{(\alpha\beta)} - A^{(\alpha\beta)}\tilde 
		YA^{(\alpha\beta)})u.
$$
But $P_1Y=Y\tilde YY=YP_1=Y$ by assumptions \ref{aspt:structural}(b,c),
so that evidently we must have
\begin{multline*}
	\mathscr{L}^{(\alpha\beta)}v = 
	P_0(B^{(\alpha\beta)} - A^{(\alpha\beta)}\tilde YA^{(\alpha\beta)})v
	=
	\\
	P_0\left\{
	\sum_{i,j=1}^n \alpha_i^* \left(
	\sum_{\ell=1}^nW_{i\ell}(F_\ell^\dag\tilde YF_j+\delta_{\ell j})
	+\sum_{\ell=1}^n F_j\tilde YW_{i\ell}F_\ell^\dag
	\right)
	\beta_j\right.
	\\
	- \sum_{i,j=1}^n \alpha_i^*
		\left(W_{ij}(G_j^\dag-F_j^\dag\tilde YA)
			-A\tilde YW_{ij}F_j^\dag\right)
	\\
	+ \sum_{i=1}^n \left(
	G_i - A\tilde YF_i - F_i\tilde YA
	\right)\beta_i
	+ B - A\tilde YA
	- \frac{|\alpha|^2+|\beta|^2}{2}
	\\
	\left.
	- \left[\sum_{i=1}^n F_i\beta_i\right]
	\tilde Y
	\left[\sum_{i=1}^n F_i\beta_i\right]
	- \left[\sum_{i,j=1}^n \alpha_i^*W_{ij}F_j^\dag\right]
	\tilde Y
	\left[\sum_{i,j=1}^n \alpha_i^* W_{ij}F_j^\dag\right]
	\right\}v
\end{multline*}
for all $v\in\mathcal{D}_0$, $\alpha,\beta\in\mathbb{C}^n$.  It remains to 
verify two things: that this expression does indeed define, for every
$\alpha,\beta\in\mathbb{C}^n$, the generator of the semigroup 
$T_t^{(\alpha\beta)}$, and that $Y$ is invertible on the range of
$P_1(B^{(\alpha\beta)} - A^{(\alpha\beta)}\tilde YA^{(\alpha\beta)})P_0$.
But the latter follows immediately from assumption 
\ref{aspt:structural}(c), so it remains to deal with the former.

In order for the above expression to define the generator of 
$T_t^{(\alpha\beta)}$, at the very least the last two terms must 
vanish---after all, they are quadratic in $\beta_i$ and $\alpha_i^*$, 
respectively, which is inconsistent with (\ref{eq:geninvdomain}).
However, this is immediate from assumption \ref{aspt:structural}(d), as
it implies that $F_j^\dag P_0 = P_0F_i=0$.  Simplifying further, we find 
that for all $v\in\mathcal{D}_0$
\begin{multline*}
	\mathscr{L}^{(\alpha\beta)}v = 
	P_0\left\{
	\sum_{i,j,\ell=1}^n \alpha_i^*W_{i\ell}(F_\ell^\dag\tilde YF_j+\delta_{\ell j})
	\beta_j
	- \sum_{i,j=1}^n \alpha_i^* 
		W_{ij}(G_j^\dag-F_j^\dag\tilde YA)
	\right.
	\\
	\left.
	+ \sum_{i=1}^n (G_i - A\tilde YF_i)\beta_i
	+ B - A\tilde YA
	- \frac{|\alpha|^2+|\beta|^2}{2}
	\right\}v.
\end{multline*}
Thus evidently the coefficients of assumption \ref{aspt:coeff} emerge 
naturally from our approach (by lemma \ref{lem:gens}), and the proof is 
complete by lemma \ref{lem:unitaritysp} and conditions 
\ref{cond:invariance}--\ref{cond:core}.
\end{proof}

\bibliographystyle{plain}
\bibliography{ref}

\end{document}